\newcommand{\monop}{\otimes}
\newcommand{\1}{\mathbf{1}}
\newenvironment{Algorithm}[2][tbh]%
{\begin{myalgo}[#1]
		\centering
		\begin{minipage}{#2}
			\begin{algorithm}[H]}%
			{\end{algorithm}
		\end{minipage}
\end{myalgo}}
\def\odiv{{ \ominus\hspace{-8pt}:}\;}
\def\smallodiv{{ \ominus\hspace{-7.45pt}:}\;}
\begin{document}

\title{Residuation for Soft Constraints: Lexicographic Orders and
        Approximation Techniques\thanks{Research partially supported by the MIUR PRIN 2017FTXR7S ``IT-MaTTerS''
		and by GNCS-INdAM (``Gruppo Nazionale per il Calcolo Scientifico'').}}

\author{Fabio Gadducci\inst{1}, Francesco Santini\inst{2}}
\institute{Dipartimento di Informatica, Universit{\`a}
	di Pisa, Pisa, Italy\\
\email{fabio.gadducci@unipi.it}
\and
Dipartimento di Matematica e Informatica, Universit{\`a}
 di Perugia, Perugia, Italy\\
\email{francesco.santini@unipg.it}\\
}

\maketitle

\begin{abstract}
Residuation theory concerns the study of partially ordered algebraic structures, most often monoids,
equipped with a weak inverse for the monoidal operator.
One of its area of application has been constraint programming, whose 
key requirement is the presence of an aggregator operator for combining preferences.
Given a residuated monoid of preferences, the paper first shows how to build a new residuated monoid 
of (possibly infinite) tuples, which is based on the lexicographic order. Second, it introduces a variant of 
an approximation technique (known as Mini-bucket) that exploits the presence of the weak inverse.
%
\end{abstract}

\section{Introduction}\label{sec:intro}
\emph{Residuation theory}~\cite{residuation} concerns the study of partially ordered algebraic structures, most often just monoids, 
equipped with an operator that behaves as a weak inverse to the monoidal one, without the structure being necessarily a group.
Such structures have since long be investigated in mathematics and computer science. Concerning e.g. logics, residuated monoids form the basis for the semantics of substructural logics~\cite{onoETC}.
As for e.g. discrete event systems such as weighted automata, the use of tropical semirings put forward the adoption of residuals for 
the approximated solution of inequalities~\cite{resbook}. 

One of the recent area of application of residuation theory has been constraint programming. 
Roughly, a  \emph{Soft Constraint Satisfaction Problem} 
is given by a relation on a set of variables, plus a preference 
score to each assignment of such variables~\cite{jacm97,schiex}. 
They key requirement is the presence of an aggregator operator for combining preferences, making such a set a monoid, and a large body 
of work has been devoted to enrich such a structure, guaranteeing that resolution techniques can be generalised by 
a parametric formalism for designing metrics and algorithms.
An example are \emph{local-consistency} algorithms~\cite{arconsistency}, devised for safely moving costs  towards constraints involving a smaller number of variables, 
without changing the set of solutions and their preference.  In order to ``move'' quantities, we need to  ``subtract'' costs somewhere and ``add'' them elsewhere. 

The paper focuses on residuated monoids for constraint programming. Their relevance for local-consistency, as mentioned above, has been spotted 
early on~\cite{residuation1,resCS}, and various extensions  has been proposed~\cite{ipl}, as well as applications to languages based on the Linda paradigm, such as 
\emph{Soft Concurrent Constraint Programming},  where a process may be \emph{telling} and \emph{asking} constraints to a centralised store~\cite{labelled}. 
More precisely, we tackle here two aspects. On the one side, we consider lexicographic orders, as used in contexts with multi-objective
problems. That is, the preference values are obtained by the combination of separate concerns, and the order of the combination matters. 
On the other side, we introduce a soft version of \emph{Bucket} and \emph{Mini-bucket} elimination algorithms, well-known exact and approximated techniques for inference,
which exploits the presence of a residuated monoid: in order to have an estimation of  the approximation on the preference level of a solution, it is necessary to use a removal operator. Finally we present a \emph{Depth-First Branch-and-Bound} algorithm, which exploits upper and lower bounds to prune search. 
Our proposals generalise the original soft versions of these approximation techniques presented in~\cite{bucketsemiring}.

Lexicographic orders are potentially useful in applications that involve multiple objectives and attributes,
and as such have been extensively investigated in the literature on soft constraints.
However, usually the connection has been established by encoding a lexicographic \emph{hard} constraint problem, where the preference structure is a Boolean algebra, 
into a soft constraint formalism. 
For example, in \cite{freuderlexi} the authors show how to encode a lexicographic order and how the resulting structure
can support specialised algorithms such as \emph{Branch-and-bound}. 
%
\emph{Hierarchical Constraint Logic Programming}~\cite{hierarchicalclp} frameworks allow to handle 
both hard constraints  and several preference levels of soft constraints, whose violations need to be minimised, and such levels are usually managed 
following a lexicographic order~\cite{valuation}. 
%
However, even if lifting the algebraic structure of a preference set to the associated set of (possibly infinite) tuples with a point-wise order is straightforward, 
doing the same for the lexicographic order is not, and this result cannot be directly achieved for the formalisms in \cite{jacm97,schiex}. 
The solution advanced in~\cite{GadducciHMW13,valuation} is to drop some preference values from the domain carrier of the set of tuples. 
The present work builds on
this proposal by dealing with sets of preferences that form residuated monoids, systematise and extending the case of infinite tuples tackled in~\cite{sca} 
to tuples of any length.

The paper has the following structure: in Section~\ref{sec:om} we present  the background on partially ordered residuated monoids, which is the structure 
we adopt to model preferences. 
In Section~\ref{sec:collapsing} we consider the collapsing elements of a monoid, which will be used to define an ad-hoc algebraic structure representing 
(possibly infinite) lexicographically ordered tuples of elements of the chosen monoid, which is given in Section~\ref{sec:lexico}. 
The latter section also presents our main construction, introducing residuation for these lexicographically ordered monoids.  
Section~\ref{sec:bucket} shows how residuation helps to find a measure of goodness  between an algorithm and its tractable approximation. 
Finally, in Section~\ref{sec:conclusion} we wrap up the paper with concluding remarks and ideas about future works.

\section{Preliminaries}\label{sec:om}


This section recalls some of the basic algebraic structures needed for defining
the set of preference values. In particular, we propose elements of \emph{ordered monoids} 
to serve as preferences, which allows us to compare and compose preference values.

\subsection{Ordered Monoids}
\label{sec:lem}

The first step is to define an algebraic structure for modelling preferences. 
We refer to~\cite{ipl} for the missing proofs as well as for an introduction and a comparison with other proposals.

\begin{definition}[Orders]
	A partial order (PO) is a pair $\langle A, \leq \rangle$ such that
	$A$ is a set 
	and $\leq \,\,\subseteq A \times A$ is a reflexive, transitive, and
	anti-symmetric  relation.
	%
	%
	A join semi-lattice (simply semi-lattice, SL) is a POs
	such that any finite subset of $A$ has a least upper bound (LUB);
	a complete lattice (CL) is a PO such that any subset of A has a LUB.
\end{definition}

The LUB of a subset $X \subseteq A$ is denoted $\bigvee X$, and it is unique. 
Note that we require the existence of $\bigvee \emptyset$, which is the bottom
of the order, denoted as $\bot$, and sometimes we will talk about a PO with bottom element
(POB).
The existence of LUBs for any subset of $A$ (thus including $\emptyset$) guarantees that 
CLs also have greatest lower bounds (GLBs) for any subset $X$ of $A$: 
it will be denoted by $\bigwedge X$. 
Whenever it exists, 
$\bigvee A$ corresponds to the top of the order, denoted as $\top$.
%

%


%
%

\begin{definition}[Ordered monoids]\label{defn:clm}
	A (commutative) monoid is a triple
	$\langle A, \monop,$ $\1 \rangle$ such that $\monop: A \times A \rightarrow A$ is
	a commutative and associative function and $\1 \in A$ is its \emph{identity} element,
	i.e., $\forall a \in A. a \monop \1 = a$.
	
	A partially ordered monoid (POM) is a 4-tuple
	$\langle A, \leq, \monop, \1 \rangle$ such that 	
	$\langle A, \leq \rangle$ is a PO and $\langle A, \monop, \1 \rangle$ a monoid.
	A semi-lattice monoid (SLM) and a complete lattice monoid (CLM) are 
	POMs such that their underlying PO is a SL, a CL respectively.
\end{definition}

For ease of notation, we use the infix notation: $a \monop b$ stands for $\monop(a,b)$.

\begin{example}[Power set]\label{ex:powerset}
	Given a (possibly infinite) set $V$ of variables, we consider
	the monoid $\langle 2^V, \cup, \emptyset \rangle$
	of (possibly empty) subsets of $V$, with union as the monoidal operator.
	Since the operator is idempotent (i.e., $\forall a\in A.\, a \monop a = a$), 
	the natural order ($\forall a, b \in A.\, a \leq b$ iff $a \monop b = b$) 
	is a partial order, and 
	it coincides with subset inclusion:
	in fact, $\langle 2^V, \subseteq, \cup, \emptyset \rangle$
	is a CLM.
\end{example}

In general, the partial order $\leq$ and the multiplication $\otimes$ can be unrelated.
This is not the case for distributive CLMs.

\begin{definition}[Distributivity]
	\label{dist}
	A SLM $\langle A, \leq, \monop, \1 \rangle$ is finitely distributive if
	\[ \forall X \subseteq_f A.\, \forall a \in A.\, \quad a \monop  \bigvee X = \bigvee \{a \monop x \mid x \in X\}. \]
	A CLM is distributive is the equality holds also for any subset.
\end{definition}

In the following, we will sometimes write $a \otimes X$ for the set  $\{a \monop x \mid x \in X\}$.

\begin{remark}
Note that $a \leq b$ is equivalent to $\bigvee \{a,b\} = b$ for all $a, b \in A$.
Hence, finite distributivity implies that $\otimes$ is monotone with respect to $\leq$ 
(i.e., $\forall a, b, c \in A.\, a \leq b \Rightarrow a \otimes c \leq  b \otimes c$)
and that $\bot$ is the zero element
of the monoid (i.e., $\forall a \in A.\, a\otimes  \bot = \bot$).
The power-set CLM in Example~\ref{ex:powerset} is distributive.
\end{remark}

\begin{example}[Extended integers]\label{ex:bipolar}
	The extended integers $\langle \mathbb{Z} \cup \{\pm \infty\}, \leq, +, 0 \rangle$, where 
	$\leq$ is the natural order, such that for $k \in \mathbb{Z}$
	$$-\infty \leq k \leq +\infty,$$
	$+$ is the natural addition, such that for $k \in \mathbb{Z} \cup \{+\infty\}$
	$$\pm\infty + k = \pm \infty, \qquad \qquad +\infty + (-\infty) = -\infty,$$
	and $0$ the identity element constitutes a distributive CLM,
	and $+\infty$ and $-\infty$ are respectively the top and the bottom 
	element of the CL.
\end{example}

%
%

\begin{remark}
	Finitely distributive SLMs precisely corresponds to \emph{tropical} semirings 
	by defining the (idempotent) sum operator as
	$a \oplus b = \bigvee \{a, b\}$ for all $a,b \in A$.
	If, moreover, $\1$ is the top of the SLM we end up 
	with \emph{absorptive} semirings~\cite{golan}, 
	which are known as $c$-semirings 
	in the soft constraint jargon~\cite{jacm97}.
	Together with monotonicity, imposing $\1$ to coincide with $\top$ means 
	that preferences are negative (i.e., $a \leq \1$ for all $a \in A$).
	
	Distributive CLMs are known in the literature as \emph{quantales}~\cite{quantales}.
\end{remark}

\begin{remark}
	\label{remark}
	Given two distributive CLMs, it is easy to show that their Cartesian product, whose elements are pairs and where the partial order and the monoidal operator are 
	defined point-wise, is a distributive CLM.
	In particular, in the following we consider the Cartesian product of
	$\langle \mathbb{Z} \cup \{\pm \infty\}, \leq, +, 0 \rangle$ with itself: its set of
	elements is $(\mathbb{Z} \cup \{\pm \infty\})^2$, the identity element is $(0,0)$,
	and the top and bottom elements are $(+\infty, +\infty)$ and $(-\infty, -\infty)$, respectively.
\end{remark}

\subsection{Residuated monoids}

We first introduce \emph{residuation}, which allows us to define a ``weak'' inverse 
operator with respect to the monoidal operator $\otimes$. 
In this way, besides aggregating values together, it is also possible to remove one from 
another. Residuation theory~\cite{golan} is concerned with the study of sub-solutions of the 
equation $b \otimes x = a$, where $x$ is a ``divisor'' of $a$ with respect to $b$. 
The set of sub-solutions of an equation contains also the 
possible solutions, whenever they exist, and in that case the maximal element is also a 
solution.

\begin{definition}[residuation]
	A residuated POM is a 5-tuple $\langle A,$ $\leq, \otimes,  \odiv, \1 \rangle$ such that
	$\langle A, \leq \rangle$ is a PO,  $\langle A, \otimes, \1 \rangle$ is a monoid, and
	$\odiv: A \times A \rightarrow A$ is a function such that 
	\begin{itemize}
		\item $\forall a, b, c \in A.\ b \otimes c \leq a \iff c \leq a \odiv b$.
	\end{itemize}
\end{definition}

In the following, we will sometimes write $a \odiv X$ and $X \odiv a$
for the set  $\{a \odiv x \mid x \in X\}$ and  $\{x \odiv a \mid x \in X\}$,
respectively.

\begin{remark}
It is easy to show that residuation is monotone on the first argument and
anti-monotone on the second. In fact, in a SML 
$\bigvee (X \odiv a) \leq \bigvee X \odiv a$,
and the same in a CLM with respect to infinite sub-sets.
However, the equality does not hold,
e.g. in the Cartesian product of the CLM  
$\langle N \cup \{\infty\}, \geq, +, 0 \rangle$ with itself.

Also, 
$a \odiv \bigvee X \leq \bigwedge (a \odiv X)$
whenever the latter exists, as it does in CLMs.
\end{remark}

\begin{remark}
\label{remarkC}
As for distributivity, given two residuated POMs, it is easy to show that their Cartesian product
is a residuated POM.
\end{remark}

Residuation implies distributivity (see e.g.~\cite[Lem. 2.2]{ipl}).

\begin{lemma}
Let $\langle A, \leq, \monop, \1 \rangle$ be a residuated POM. 
Then it is monotone.
If additionally it is a SLM (CLM), then it is finitely distributive
(distributive).
\end{lemma}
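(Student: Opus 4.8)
The plan is to observe that the residuation axiom says exactly that, for each fixed $b \in A$, the map $(-) \monop b$ is left adjoint to the map $(-) \odiv b$, i.e. they form a (monotone) Galois connection. All three claims are then instances of the standard facts that adjoint maps are monotone and that a left adjoint preserves all existing joins. Rather than invoking this abstractly, I would spell the arguments out concretely, using only reflexivity, transitivity, commutativity of $\monop$, and the residuation biconditional.

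First I would establish monotonicity, i.e. $a \leq b \Rightarrow a \monop c \leq b \monop c$. Working with $c \monop a$ and $c \monop b$ via commutativity, I start from the reflexivity $c \monop b \leq c \monop b$; reading the residuation axiom right-to-left (with the divided element taken to be $c \monop b$) yields $b \leq (c \monop b) \odiv c$. Composing with the hypothesis $a \leq b$ by transitivity gives $a \leq (c \monop b) \odiv c$, and a single application of residuation left-to-right returns $c \monop a \leq c \monop b$, hence $a \monop c \leq b \monop c$. This uses nothing beyond the biconditional and the order axioms, so monotonicity holds for any residuated POM.

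Next I would prove distributivity, treating the SLM and CLM cases by one uniform argument, the sole difference being whether $X$ ranges over finite subsets ($X \subseteq_f A$) or arbitrary subsets; in each case the LUBs in question exist by hypothesis. The inequality $\bigvee (a \monop X) \leq a \monop \bigvee X$ is the easy half: since $x \leq \bigvee X$ for every $x \in X$, monotonicity gives $a \monop x \leq a \monop \bigvee X$, so $a \monop \bigvee X$ is an upper bound of $a \monop X$ and therefore dominates their LUB. For the converse $a \monop \bigvee X \leq \bigvee (a \monop X)$, set $u := \bigvee (a \monop X)$. From $a \monop x \leq u$ for each $x$, residuation gives $x \leq u \odiv a$, so $u \odiv a$ is an upper bound of $X$ and hence $\bigvee X \leq u \odiv a$; a final application of residuation yields $a \monop \bigvee X \leq u$. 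Antisymmetry then delivers the equality.

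I do not anticipate a genuine obstacle here: the content is just the well-known preservation of joins by a left adjoint, and the argument is insensitive to whether $X$ is finite or infinite. The only point demanding care is the bookkeeping, namely keeping the two directions of the residuation biconditional straight and instantiating its variables correctly at each step, together with noting explicitly that the easy inequality needs only monotonicity (itself derived from residuation) whereas the hard inequality is where residuation does the real work of pushing $a \monop \bigvee X$ below the join.
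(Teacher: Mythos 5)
Your proof is correct, and it is the standard argument: the residuation biconditional says precisely that $(-) \monop b$ is left adjoint to $(-) \odiv b$, from which monotonicity and preservation of (finite or arbitrary) existing joins follow exactly as you spell out, with each step of your concrete derivation checking out against the axiom $b \otimes c \leq a \iff c \leq a \odiv b$. The paper itself gives no proof of this lemma, deferring it to \cite[Lem.~2.2]{ipl}, so there is nothing to diverge from; one small merit of your uniform treatment is that it silently covers $X = \emptyset$ as well (the upper-bound condition is vacuous, giving $a \monop \bot \leq \bot$), which recovers the paper's remark that $\bot$ is a zero element for $\monop$.
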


Conversely, it is noteworthy that CLMs are always residuated,
and the following folklore fact holds.
%

\begin{lemma}
\label{resCLM}
Let $\langle A, \leq, \monop, \1 \rangle$ be a distributive CLM. 
It is residuated and 
$\forall a, b \in A.\ a \odiv b = \bigvee \{c \mid b \otimes c \leq a\}$.
\end{lemma}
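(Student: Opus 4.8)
The plan is to prove Lemma~\ref{resCLM} by showing that the candidate definition
$a \odiv b := \bigvee \{c \mid b \otimes c \leq a\}$
genuinely satisfies the residuation adjunction
$b \otimes c \leq a \iff c \leq a \odiv b$.
Since we are in a CLM, the least upper bound on the right-hand side always exists (every subset has a LUB), so the candidate is well-defined, and the only real content is the biconditional. I would isolate the set $S_{a,b} := \{c \mid b \otimes c \leq a\}$ and prove the two implications separately.

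For the direction $c \leq a \odiv b \Rightarrow b \otimes c \leq a$, the key observation is that $a \odiv b = \bigvee S_{a,b}$ is itself a member of $S_{a,b}$, i.e. $b \otimes \bigvee S_{a,b} \leq a$. This is where distributivity of the CLM is essential: by Definition~\ref{dist} (extended to arbitrary subsets),
\[ b \otimes \bigvee S_{a,b} = \bigvee \{b \otimes c \mid c \in S_{a,b}\}. \]
Every element $b \otimes c$ with $c \in S_{a,b}$ satisfies $b \otimes c \leq a$ by definition of $S_{a,b}$, so $a$ is an upper bound of the set $\{b \otimes c \mid c \in S_{a,b}\}$, and hence its LUB is also below $a$. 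This gives $b \otimes (a \odiv b) \leq a$. Now if $c \leq a \odiv b$, monotonicity of $\otimes$ (which holds by the preceding Lemma, since a distributive CLM is in particular monotone) yields $b \otimes c \leq b \otimes (a \odiv b) \leq a$, as desired.

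The converse direction $b \otimes c \leq a \Rightarrow c \leq a \odiv b$ is essentially immediate from the definition of the LUB: if $b \otimes c \leq a$, then $c$ belongs to $S_{a,b}$, and every element of a set lies below that set's least upper bound, so $c \leq \bigvee S_{a,b} = a \odiv b$. Combining the two implications establishes the adjunction, which is exactly the condition making $\langle A, \leq, \otimes, \odiv, \1 \rangle$ a residuated POM.

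I expect the main obstacle to be the step $b \otimes \bigvee S_{a,b} \leq a$, since it is the one place where we genuinely need the full strength of distributivity over arbitrary (possibly infinite) subsets rather than just finite ones; the empty-set and singleton cases, as well as checking that $\bigvee S_{a,b}$ exists at all, are routine given that we are in a complete lattice. Everything else reduces to unfolding the definition of $\bigvee$ and invoking monotonicity, so no delicate case analysis is anticipated beyond confirming that distributivity is available in the stated generality.
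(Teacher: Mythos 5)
Your proof is correct, and it is the standard adjunction argument for this folklore fact; the paper itself gives no proof of Lemma~\ref{resCLM} (it is stated as folklore, with missing proofs deferred to the reference~\cite{ipl}), so there is nothing in the text to diverge from. Both directions are handled properly: the converse is immediate from the definition of LUB, and the forward direction correctly isolates $b \otimes \bigvee S_{a,b} \leq a$ as the one step needing distributivity over \emph{arbitrary} subsets. One small fix: for monotonicity of $\otimes$ you appeal to ``the preceding Lemma,'' but the lemma immediately preceding Lemma~\ref{resCLM} in the paper derives monotonicity \emph{from residuation}, which is exactly what you are trying to establish --- citing it would be circular. The correct source is the Remark following Definition~\ref{dist}, which notes that finite distributivity already implies monotonicity (take $X = \{c, a \odiv b\}$ with $c \leq a \odiv b$, so that $b \otimes (a \odiv b) = (b \otimes c) \vee (b \otimes (a \odiv b)) \geq b \otimes c$). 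With that citation repaired, the argument is complete and self-contained.
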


We close with a simple lemma relating residuation with the top and the bottom 
elements of a POM.

\begin{lemma}
	\label{someProps}
	Let $\langle A, \leq, \monop, \odiv, \1 \rangle$ be a residuated POM. 
	If it has the bottom element $\bot$, then it also has the top element $\top$and
	$\forall a \in A.\ a \odiv \bot = \top$.
	Viceversa, if it has the top element $\top$ then
	$\forall b \in A.\ \top \odiv b = \top$.
\end{lemma}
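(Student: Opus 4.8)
The plan is to run everything off the residuation adjunction $b \otimes c \leq a \iff c \leq a \odiv b$, together with the order-theoretic observation that an element $e \in A$ equals $\top$ (and thereby witnesses that $\top$ exists) exactly when $c \leq e$ holds for every $c \in A$. The recurring move is this: to identify some value $a \odiv b$ with $\top$, it suffices to show $c \leq a \odiv b$ for all $c$, and by the adjunction this is equivalent to the single requirement that $b \otimes c \leq a$ hold for all $c$, which I will arrange to be trivially true.

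I would dispatch the second statement first, since it needs no auxiliary facts. Assuming $\top$ exists and fixing $b \in A$, I instantiate the adjunction at $a = \top$: for every $c$ we have $b \otimes c \leq \top$ because $\top$ is the maximum, whence $c \leq \top \odiv b$. Thus $\top \odiv b$ is an upper bound of $A$; being itself an element of $A$, it must coincide with $\top$, giving $\top \odiv b = \top$.

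For the first statement I would first establish that $\bot$ is absorbing, i.e. $\bot \otimes x = \bot$ for all $x$. Instantiating the adjunction with $x$ as the first monoidal argument, $\bot$ as the second, and $a = \bot$, I get $x \otimes \bot \leq \bot \iff \bot \leq \bot \odiv x$; the right-hand side holds because $\bot$ is the bottom, so $x \otimes \bot \leq \bot$, and since $\bot \leq x \otimes \bot$ as well, commutativity yields $\bot \otimes x = \bot$. With this in hand, fix $a \in A$ and instantiate the adjunction at $b = \bot$: for every $c$ the premise $\bot \otimes c \leq a$ reduces by absorption to $\bot \leq a$, which always holds, so $c \leq a \odiv \bot$ for all $c$. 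Hence $a \odiv \bot$ is an upper bound of $A$, which simultaneously shows that $\top$ exists and equals $a \odiv \bot$; uniqueness of the greatest element then forces this value to be independent of $a$.

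The computation is uniformly an exercise in applying the adjunction, and the only step requiring a genuine (if short) argument is the absorption property $\bot \otimes x = \bot$, which is the crux of the first statement. Conceptually it reflects that $x \mapsto x \otimes c$, being a left adjoint, preserves the empty join $\bot$; the direct derivation above sidesteps any appeal to this abstract fact. Everything else is immediate once absorption is available, so I expect no real obstacle beyond keeping the two directions of the adjunction straight and distinguishing the roles of the three variables.
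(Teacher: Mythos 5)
Your proof is correct and is exactly the standard argument: the paper states this lemma without an explicit proof (deferring omitted proofs to its reference on residuated structures), and the intended derivation is precisely your adjunction-based one. Both halves check out, including the one genuinely non-trivial step, the absorption property $\bot \otimes x = \bot$ extracted from the trivially true inequality $\bot \leq \bot \odiv x$, after which $c \leq a \odiv \bot$ for all $c$ simultaneously witnesses the existence of $\top$ and the identity $a \odiv \bot = \top$.
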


\begin{remark}
	Nothing can be stated for $\bot \odiv a$, since there could be elements that are 
	$\bot$-divisors: see again the Cartesian product of the CLM  
	$\langle \mathbb N \cup \{\infty\}, \geq, +, 0 \rangle$ with itself, where 
	$\langle \infty, 3\rangle  \otimes \langle 4, \infty \rangle = \langle \infty,\infty \rangle$. 
	
	Similarly, nothing can be stated for $a \odiv \top$: see the Cartesian 
	product of the CLM  
	$\langle \mathbb N \cup \{\infty\}, \geq, +, 0 \rangle$ with its dual CLM
	$\langle \mathbb N \cup \{\infty\}, \leq, +, 0 \rangle$.
\end{remark}


\section{The ideal of collapsing elements}
\label{sec:collapsing}
As shown in~\cite{GadducciHMW13}, the first step for obtaining SLMs based on a lexicographic order 
is to restrict the carrier of the monoid. 

\begin{definition}
	Let $\langle A, \monop, \1 \rangle$ be a monoid. Its sub-set $I(A)$ 
	of \emph{cancellative} elements is defined as 
	$\{ c \mid \forall a, b \in A.\ a \otimes c = b \otimes c \implies a = b \}$.
\end{definition}

We  recall a well-known fact.

\begin{lemma}
	\label{ideal}
	Let $\langle A, \monop, \1 \rangle$ be a monoid.
	Then $I(A)$ is a sub-monoid of $A$ and $C(A) = A\setminus I(A)$ 
	is a prime ideal of $A$.
\end{lemma}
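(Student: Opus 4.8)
The plan is to verify the three constituent claims: that $\1 \in I(A)$, that $I(A)$ is closed under $\otimes$, and that its complement $C(A)$ is simultaneously an ideal and prime. The first is immediate: from $a \otimes \1 = b \otimes \1$ the identity law gives $a = b$, so $\1$ is cancellative. For closure I would take $c, d \in I(A)$ and assume $a \otimes (c \otimes d) = b \otimes (c \otimes d)$; reassociating to $(a \otimes c) \otimes d = (b \otimes c) \otimes d$ and cancelling first $d$ and then $c$ yields $a = b$, so $c \otimes d \in I(A)$ and $I(A)$ is a sub-monoid.

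The key observation is that the two statements about $C(A)$ are just the contrapositives of properties of $I(A)$. For the ideal (absorption) property I must show $a \otimes x \in C(A)$ whenever $x \in C(A)$ and $a \in A$; equivalently, that $a \otimes x \in I(A)$ forces $x \in I(A)$. So I assume $a \otimes x$ is cancellative and $u \otimes x = v \otimes x$. Multiplying both sides by $a$ and using commutativity and associativity gives $u \otimes (a \otimes x) = v \otimes (a \otimes x)$, and cancelling $a \otimes x$ yields $u = v$, so $x$ is cancellative. This is precisely the claim that $C(A)$ absorbs products with arbitrary elements of $A$, i.e. that it is an ideal.

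For primality I must show that $a \otimes b \in C(A)$ implies $a \in C(A)$ or $b \in C(A)$; contrapositively, that $a, b \in I(A)$ implies $a \otimes b \in I(A)$, which is exactly the closure of $I(A)$ under $\otimes$ already established. Since $\1 \in I(A)$ we have $\1 \notin C(A)$, so $C(A)$ is a proper subset and hence a genuine prime ideal; the degenerate case in which $A$ is entirely cancellative merely makes $C(A) = \emptyset$, for which the ideal and prime conditions hold vacuously.

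Being a folklore fact, there is no genuine obstacle: the argument is a routine chase combining associativity, commutativity, and the defining cancellation property. The only points requiring slight care are to transport the auxiliary factor $a$ correctly across the products via commutativity, and to recognise that the ideal and prime conditions for $C(A)$ are simply the complementary reformulations of the absorption and multiplicative-closure properties of $I(A)$.
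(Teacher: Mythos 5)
Your proof is correct and is exactly the routine verification the paper has in mind: the paper gives no explicit proof (it declares the fact well-known and ``all the proofs straightforward''), and your argument---cancellativity of $\1$, closure of $I(A)$ under $\otimes$ via associativity, and the ideal and primality conditions for $C(A)$ obtained as contrapositives, with the empty-$C(A)$ case holding vacuously---is the standard one being alluded to. No gaps; nothing further is needed.
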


Explicitly, $C(A) = \{ c \mid \exists a, b \in A.\ a \neq b \wedge a \otimes c = b \otimes c\}$.
Being an ideal means that $\forall a \in A, c \in C( A).\ a \otimes c \in C(A)$,
and being prime further states that 
$\forall a, b \in A.\ a\otimes b \in C( A) \implies a \in C(A) \vee b \in C(A)$.
All the proofs are straightforward, and we denote $C(A)$ as the set of \emph{collapsing} 
elements of $A$. 

Note that an analogous closure property does not hold for LUBs.

\begin{example}\label{flat}
	Consider the monoid of natural numbers $\langle \mathbb N, +, 0 \rangle$ and 
	the (non distributive) CLM with elements $\mathbb N \cup \{\bot,\top\}$ obtained by lifting 
	the flat order (i.e., $a \not \leq b$ for any $a, b \in \mathbb N$ as well as 
	$a + \bot = \bot = \top + \bot$ and $a + \top = \top$  for any $a \in \mathbb N$).
	Then, $I(\mathbb N \cup \{\bot,\top\}) = \mathbb N$ is not closed under finite LUBs. 
	
	Now, let us consider the distributive CLM with elements $\mathbb N \cup \{\infty\}$ 
	obtained by lifting the natural order induced by addition.
	We have that $I(\mathbb N \cup  \{\infty\}) = \mathbb N$ is a (finitely distributive) SLM, yet
	it is not closed with respect to infinite LUBs.
\end{example}

We now present a simple fact that is needed later on.

\begin{lemma}
	Let $A_1, A_2$ be POMs and $A_1 \times A_2$ their Cartesian product.
	Then we have $C(A_1 \times A_2) = C(A_1) \times A_2 \cup A_2 \times C(A_2)$. 
\end{lemma}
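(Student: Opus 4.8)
The plan is to reduce the statement to an identity about the cancellative sub-monoids and then take complements. The crucial observation is that cancellativity in a Cartesian product decomposes coordinate-wise, namely
\[ I(A_1 \times A_2) = I(A_1) \times I(A_2). \]
Once this is established, the claimed formula follows by an elementary set-theoretic manipulation. By definition $C(A_1 \times A_2) = (A_1 \times A_2) \setminus I(A_1 \times A_2)$, so a pair $(a_1, a_2)$ is collapsing exactly when it is \emph{not} the case that both $a_1 \in I(A_1)$ and $a_2 \in I(A_2)$; by De Morgan this means $a_1 \in C(A_1)$ or $a_2 \in C(A_2)$, which is precisely membership in $C(A_1) \times A_2 \cup A_1 \times C(A_2)$.

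It then remains to prove the coordinate-wise decomposition, which I would split into the two inclusions. For $\supseteq$, I would suppose $c_1 \in I(A_1)$ and $c_2 \in I(A_2)$ and assume $(a_1, a_2) \monop (c_1, c_2) = (b_1, b_2) \monop (c_1, c_2)$. Since the monoidal operator of the product acts point-wise, this equation holds in each component, i.e. $a_i \monop c_i = b_i \monop c_i$ for $i = 1, 2$; cancelling $c_i$ in $A_i$ yields $a_i = b_i$, hence $(a_1, a_2) = (b_1, b_2)$ and $(c_1, c_2) \in I(A_1 \times A_2)$.

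For $\subseteq$, I would argue contrapositively, showing that $c_1 \notin I(A_1)$ forces $(c_1, c_2) \notin I(A_1 \times A_2)$ (and symmetrically in the second coordinate). If $c_1$ is not cancellative there are witnesses $a_1 \neq b_1$ with $a_1 \monop c_1 = b_1 \monop c_1$. Padding with the identity in the second coordinate gives $(a_1, \1) \neq (b_1, \1)$ while $(a_1, \1) \monop (c_1, c_2) = (b_1, \1) \monop (c_1, c_2)$, so $(c_1, c_2)$ is indeed collapsing. This yields $I(A_1 \times A_2) \subseteq I(A_1) \times I(A_2)$ and completes the decomposition.

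The argument is essentially routine; the only mildly delicate point, and the one I would flag as the main obstacle, is the choice of witnesses in the $\subseteq$ direction, where one must lift a failure of cancellativity in a single coordinate to a genuine failure in the product by padding the other coordinate with equal (e.g. identity) values. I would also note in passing that, for the union on the right-hand side to be a subset of $A_1 \times A_2$, the second summand should read $A_1 \times C(A_2)$, which is exactly what the complementation argument delivers.
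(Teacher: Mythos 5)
Your proof is correct: the paper states this lemma without proof (it is presented as a ``simple fact''), and your argument --- establishing $I(A_1 \times A_2) = I(A_1) \times I(A_2)$ via component-wise cancellation in one direction and padding witnesses with $\1$ in the other, then taking complements with De Morgan --- is exactly the routine argument the authors intend. You are also right to flag that the second summand in the statement should read $A_1 \times C(A_2)$ rather than $A_2 \times C(A_2)$; that is a typo in the paper, and your complementation step derives the corrected form.
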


\begin{example}
	Let us consider the tropical SLM $\langle \mathbb N \cup \{\infty\}, \geq, +, 0 \rangle$ 
	and the Cartesian product with itself. 
	Clearly, $C(\mathbb N\times \mathbb N)$ is not closed under finite LUBs: it suffices to consider 
	$X= \{\langle \infty, 3 \rangle, \langle 4, \infty \rangle\} \subseteq C(\mathbb N\times \mathbb N)$,
	since $\bigvee X  = \langle 3,4 \rangle \not \in C(\mathbb N\times \mathbb N)$.
	Neither is $C(\mathbb N\times \mathbb N)$ closed under residuation, as suggested by 
	Lem.~\ref{someProps}, since the top element is not necessarily collapsing.
	Indeed, in $C(\mathbb N\times \mathbb N)$ we have 
	$\langle \infty, 4 \rangle  \odiv \langle \infty, 3 \rangle = \langle 0,1\rangle$.
\end{example}

\begin{remark}
	Note that in an absorptive CLM $A$ we have that $a \odiv b = 1$ 
	whenever $b \leq a$. Hence $C(A)$ is usually not closed under residuation, since 
	$1$ is cancellative.
\end{remark}

\subsection{A different view on collapsing elements}

When the first presentation of lexicographic 
SLMs was provided~\cite{GadducciHMW13}, a different 
set of collapsing elements was considered.

\begin{definition}[\cite{GadducciHMW13}]
	Let $\langle A, \leq, \monop, \1 \rangle$ be a POM. Its sub-set $C'(A)$ 
	is defined as 
	$\{ c \mid \exists a, b \in A.\ a < b \wedge a \otimes c = b \otimes c\}$.
\end{definition}

Clearly, $C'(A) \subseteq C(A)$. However, we can replicate Lem.~\ref{ideal}.
%

\begin{lemma}
	Let $\langle A, \monop, \1 \rangle$ be a monoid.
	Then $C'(A)$ is an ideal of $A$.
	If $\monop$ is monotone, then 
	$I'(A) = A \setminus C'(A) $ is a sub-monoid of $A$
	and $C'(A)$ a prime ideal of $A$.
\end{lemma}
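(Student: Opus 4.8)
The plan is to retrace the argument of Lemma~\ref{ideal}, pinpointing the single place where the strict witness $a < b$ (as opposed to mere inequality $a \neq b$) forces the monotonicity hypothesis into play.

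First I would show that $C'(A)$ is an ideal, which needs only commutativity and associativity. Let $c \in C'(A)$, with witnesses $x < y$ such that $x \otimes c = y \otimes c$, and let $a \in A$ be arbitrary. The very same pair witnesses $a \otimes c \in C'(A)$, since $x \otimes (a \otimes c) = a \otimes (x \otimes c) = a \otimes (y \otimes c) = y \otimes (a \otimes c)$, while $x < y$ is left untouched.

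Next I would record that $\1 \in I'(A)$: if $\1$ were in $C'(A)$, some $x < y$ would give $x = x \otimes \1 = y \otimes \1 = y$, contradicting strictness. Closing $I'(A)$ under $\otimes$ is, read contrapositively, exactly the primeness of $C'(A)$, so the two remaining claims are one. To prove primeness, assume $a \otimes b \in C'(A)$ with witnesses $x < y$, so that $(x \otimes b) \otimes a = (y \otimes b) \otimes a$, and suppose $a \notin C'(A)$; the goal is then $b \in C'(A)$. This is where monotonicity is used: from $x < y$ it delivers $x \otimes b \leq y \otimes b$. These two products cannot be strictly ordered, for $x \otimes b < y \otimes b$ together with $(x \otimes b) \otimes a = (y \otimes b) \otimes a$ would make $a$ collapsing, against assumption; hence $x \otimes b = y \otimes b$, and this equation paired with the strict $x < y$ exhibits $b \in C'(A)$.

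The only delicate point is this final dichotomy, and it is precisely what distinguishes $C'(A)$ from $C(A)$. For $C(A)$ the defect condition $\neq$ is inherited for free along the map $(-) \otimes b$ (either $x \otimes b \neq y \otimes b$, routing the witness to $a$, or not, routing it to $b$), so no order assumption is needed. For $C'(A)$ a strict witness $x \otimes b < y \otimes b$ becomes available only after monotonicity upgrades $x \leq y$ to $x \otimes b \leq y \otimes b$; the subsequent collapse to equality is what sends the surviving strict witness to $b$. Everything else reduces to routine rearrangement under commutativity and associativity.
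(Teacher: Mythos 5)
Your proof is correct and matches the argument the paper intends (the paper leaves it implicit, remarking only that the proofs are straightforward): ideal-ness by transporting the witness pair, $\1 \in I'(A)$ by strictness, and primeness via the dichotomy $x \otimes b < y \otimes b$ versus $x \otimes b = y \otimes b$. Your analysis of where monotonicity enters is exactly right, and indeed the paper's footnote confirms it: all your dichotomy needs is that $x \otimes b$ and $y \otimes b$ be \emph{comparable}, which is the weaker condition $a \leq b \implies (a \otimes c \leq b \otimes c) \vee (b \otimes c \leq a \otimes c)$ mentioned there.
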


Explicitly, $I'(A) = \{ c \mid \forall a, b \in A.\ a \otimes c = b \otimes c \implies a \not < b \}$.
The definitions we encounter in the next section 
could then be rephrased using 
$I'(A)$ and $C'(A)$ with minimal adjustments, thus confirming the proposal
in~\cite{GadducciHMW13}.\footnote{And in fact, the lemma holds also for 
a property that is weaker than monotonicity: it suffices that 
$\forall a, b, c.\ a \leq b \implies (a \otimes c \leq b \otimes c) \vee (b \otimes c \leq a \otimes c)$.
}

However, what is in fact noteworthy is that the two approaches are coincident
whenever distributivity holds, as shown by the lemma below.

\begin{lemma}
	Let $\langle A, \leq, \monop, \1 \rangle$ be a finitely distributive SLM.
	Then $C'(A) = C(A)$.
\end{lemma}
\begin{proof}
	We already noted that $C'(A) \subseteq C(A)$ always holds. Now, let $a, b, c$ such that
	$a \neq b \wedge a \otimes c = b \otimes c$: it suffices to consider $a \vee b$, noting 
	that it must be either $a < a \vee b$ or $b < a \vee b$ and that by distributivity 
	$(a \vee b) \otimes c = a \otimes c = b \otimes c$.
\end{proof}

\begin{remark}
Consider the (non distributive) CLM $\langle \mathbb [0 \ldots n] \cup \{\bot,\top\}, +, 0 \rangle$ 
obtained by lifting the initial segment $[0 \ldots n]$ of the natural numbers with the flat order (as done for
the CLM of all natural numbers in Example~\ref{flat}). Here addition is capped, so that 
e.g. $n + m = n$ for all $m$. Hence, $C([0 \ldots n] \cup \{\bot,\top\}) = [1 \ldots n] \cup \{\bot,\top\}$
that is, all elements except $0$. Instead, $C'([0 \ldots n] \cup \{\bot,\top\})) =  \{\bot,\top\}$.
\end{remark}

\section{On lexicographic orders}\label{sec:lexico}
We now move to lexicographic orders, taking into account the results in Sect.~\ref{sec:collapsing}.

\begin{proposition}\label{def:lexilist}
	Let $\langle A, \leq, \monop, \1 \rangle$ be a POM with bottom element $\bot$.
	Then we can define a family $\langle Lex_k(A), \leq_k, \monop^k, \1^k \rangle$ 
	of POMs with bottom element $\bot^k$ such that $\monop^k$ is defined point-wise, 
	$Lex_1(A) = A$ and $\leq_1 = \leq$, and
	
	\begin{itemize}
		\item $Lex_{k+1}(A) = I(A) Lex_k(A) \cup C(A) \{\bot\}^k$,
		\item $a_1 \ldots a_k \leq_k b_1 \ldots b_k$ if $a_1 < b_1$ or $a_1 = b_1$ 
		and  $a_2 \ldots a_k \leq_{k-1} b_2 \ldots b_k$.
	\end{itemize}
\end{proposition}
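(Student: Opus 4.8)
The plan is to argue by induction on $k$. For $k=1$ there is nothing to prove: $Lex_1(A) = A$, $\leq_1 = \leq$, $\monop^1 = \monop$, $\1^1 = \1$ and $\bot^1 = \bot$, so the structure is the given POM with bottom. For the inductive step I assume that $\langle Lex_k(A), \leq_k, \monop^k, \1^k\rangle$ is a POM with bottom $\bot^k$, and I must verify four things about $Lex_{k+1}(A) = I(A) Lex_k(A) \cup C(A)\{\bot\}^k$ equipped with the pointwise operation and the stated order: (i) the carrier is closed under $\monop^{k+1}$, so the operation is well defined; (ii) $\1^{k+1}$ lies in the carrier and is the identity; (iii) $\leq_{k+1}$ is a partial order; and (iv) $\bot^{k+1}$ lies in the carrier and is its bottom.

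The heart of the matter is (i), and this is where the exact shape of the carrier is used. Writing an element of $Lex_{k+1}(A)$ as a head in $A$ followed by a tail, I case split on whether the two heads $u_1, v_1$ are cancellative or collapsing. If both lie in $I(A)$, then $u_1 \monop v_1 \in I(A)$ since $I(A)$ is a sub-monoid (Lem.~\ref{ideal}), while the two tails lie in $Lex_k(A)$ and their product lies there by the inductive hypothesis; hence the result is in $I(A)Lex_k(A)$. If at least one head is collapsing, then $u_1 \monop v_1 \in C(A)$ since $C(A)$ is an ideal (Lem.~\ref{ideal}), and I must show the resulting tail equals $\bot^k$: by construction the tail opposite a collapsing head is already $\bot^k$, so the product tail has the form $\bot^k \monop^k t$ for some $t \in Lex_k(A)$ (or $\bot^k \monop^k \bot^k$). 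The step I expect to be the real obstacle — and the reason the carrier must be trimmed in exactly this way — is precisely this identity, which reduces to $\bot$ being \emph{absorbing} under $\monop$, i.e. a zero element. I would therefore isolate absorption of $\bot$ as the key fact; it is available in the distributive setting by the remark following Definition~\ref{dist}, and it is what makes the summand $C(A)\{\bot\}^k$ stable under the pointwise product.

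The remaining points are routine and follow the usual pattern for lexicographic orders. For (ii), $\1$ is cancellative, so $\1 \in I(A)$, and $\1^k \in Lex_k(A)$ by induction, whence $\1^{k+1} = \1\,\1^k \in I(A)Lex_k(A)$; it acts as the identity because $\monop^{k+1}$ is pointwise and $\1$ is the identity of $A$. For (iv), $\bot^{k+1} = \bot\,\bot^k$ lies in the carrier whether $\bot \in I(A)$ (using $\bot^k \in Lex_k(A)$) or $\bot \in C(A)$ (using the $\{\bot\}^k$ summand), and a short induction on the order shows $\bot^{k+1} \leq_{k+1} a$ for every $a$: either $\bot < a_1$, giving the first clause, or $\bot = a_1$, in which case the comparison reduces to $\bot^k \leq_k a_2\ldots a_{k+1}$, true by the inductive hypothesis. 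Finally, for (iii) reflexivity, antisymmetry and transitivity of $\leq_{k+1}$ are inherited from those of $\leq$ on $A$ and of $\leq_k$ on tails through the standard lexicographic case analysis, using irreflexivity and asymmetry of the strict part $<$; the only point to keep in mind is that when the two heads coincide and lie in $C(A)$ both tails are $\bot^k$, so the tail comparison is trivial.
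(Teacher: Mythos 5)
Your proof is correct where the paper's is merely asserted, and the comparison is instructive. The paper's own proof consists precisely of the transitivity case analysis you sketch under point (iii) --- one head strictly below the other in one of the two comparisons, or equal heads and induction on the tails --- and dismisses everything else, monoidality of $\monop^k$ included, as straightforward. You instead make explicit the only genuinely delicate point, namely closure of the carrier $I(A)\,Lex_k(A) \cup C(A)\{\bot\}^k$ under the pointwise product, and your diagnosis is exactly right: in the case where some head is collapsing, closure reduces to $\bot$ being absorbing, and that is \emph{not} a consequence of the stated hypotheses. Indeed the statement fails literally without it: in the POM $\langle \mathbb{N} \cup \{\infty\}, \leq, +, 0 \rangle$ with the natural (non-reversed) order, the bottom is the identity $0$, one has $C(A) = \{\infty\}$ and $I(A) = \mathbb{N}$, and then $(\infty,0) \monop^2 (0,5) = (\infty,5) \notin Lex_2(A)$, so $\monop^2$ does not restrict to the carrier. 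The omission is harmless for the paper's later uses, since absorption holds in every setting where the construction is exploited: in a finitely distributive SLM it is the instance $X = \emptyset$ of Definition~\ref{dist}, as you note, and in a residuated POM with bottom it follows from $\bot \leq \bot \odiv b$, which by residuation gives $b \monop \bot \leq \bot$; but strictly speaking it should be assumed (or derived) rather than left implicit, and your proof is the more honest one for flagging it. Your remaining verifications --- $\1 \in I(A)$ because identities are cancellative, $\bot^{k+1}$ in the carrier in both cases $\bot \in I(A)$ and $\bot \in C(A)$ and least for $\leq_{k+1}$, the order axioms via the standard lexicographic analysis with the observation that equal collapsing heads force both tails to be $\bot^k$ --- are routine and agree with what the paper leaves unsaid, so apart from surfacing the absorption hypothesis the two arguments coincide.
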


\begin{proof}
	Monoidality of $\monop^k$ as well as reflexivity and symmetry of $\leq_k$ are straightforward.
	As for transitivity, let $a_1 \ldots a_k \leq_k b_1 \ldots b_k$ and $b_1 \ldots b_k \leq_k c_1 \ldots c_k$.
	If $a_1 \leq b_1 < c_1$ or $a_1 < b_1 \leq c_1$, it follows immediately; if $a_1 = b_1 = c_1$, then it
	holds by induction.
\end{proof}

Note that $Lex_k(A)$ is contained in the $k$-times Cartesian product $A^k$, 
and the definitions of $\monop^k$, $\1^k$, and $\bot^k$ coincide.
Also, the bottom element is needed for padding the tuples, in order to make
simpler the definition of the order.

We can provide an alternative definition for such POMs.

\begin{lemma}
	Let $\langle A, \leq, \monop, \1 \rangle$ be a POM with bottom element $\bot$.
	Then $Lex_{k+1}(A) = \bigcup_{i \leq k}I(A)^i A \{\bot\}^{k-i}$ for all $k$.
\end{lemma}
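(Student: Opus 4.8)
The plan is to prove the equality $Lex_{k+1}(A) = \bigcup_{i \leq k} I(A)^i A \{\bot\}^{k-i}$ by induction on $k$, unfolding the recursive definition from Proposition~\ref{def:lexilist}. The base case $k=0$ reads $Lex_1(A) = A = I(A)^0 A \{\bot\}^0$, which is immediate since $Lex_1(A) = A$ by definition. For the inductive step, I would expand $Lex_{k+1}(A) = I(A) Lex_k(A) \cup C(A)\{\bot\}^k$ and substitute the induction hypothesis $Lex_k(A) = \bigcup_{i \leq k-1} I(A)^i A \{\bot\}^{k-1-i}$.

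The core of the argument is then a distribution of concatenation over the union. Prepending $I(A)$ to the inductive hypothesis gives
\[
I(A) Lex_k(A) = \bigcup_{i \leq k-1} I(A)^{i+1} A \{\bot\}^{k-1-i} = \bigcup_{1 \leq j \leq k} I(A)^{j} A \{\bot\}^{k-j},
\]
after reindexing with $j = i+1$. This already accounts for every summand in the target union with index $j$ ranging from $1$ to $k$. The only missing piece is the summand $j = 0$, namely $A\{\bot\}^k$, and I would show it is supplied exactly by the remaining term $C(A)\{\bot\}^k$ together with the $I(A)$-prepended copies. Concretely, observe that $A\{\bot\}^k = I(A)\{\bot\}^k \cup C(A)\{\bot\}^k$ since $A = I(A) \cup C(A)$ is a disjoint union (Lem.~\ref{ideal}). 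The $C(A)\{\bot\}^k$ half matches the leftover term directly, while $I(A)\{\bot\}^k$ is the $i = k-1$ contribution $I(A)^{1} (I(A)^{k-1}\text{-free})$—more carefully, it arises as the $j=k$ slice specialised where the tail $A\{\bot\}^{k-j}$ with $j=k$ degenerates, so one must track how the $\{\bot\}$ padding interacts with the $A$ factor.

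The step I expect to be the main obstacle is precisely this bookkeeping at the boundary indices, reconciling the two-term recursive definition (one term built from $I(A)$, one from $C(A)$) with the single uniform union in the statement. The subtlety is that $\{\bot\}^k$ can be read as $A\{\bot\}^{k}$ restricted, and that the $A$ in position $i+1$ of a generic summand can be either cancellative or collapsing; one must verify that taking $A = I(A) \sqcup C(A)$ in the right place reproduces neither too few nor too many tuples. I would therefore argue set inclusion in both directions rather than relying on a slick reindexing, checking that every tuple $a_1 \ldots a_{k+1}$ admitted by the recursive definition—where a prefix lies in $I(A)^i$ and the first collapsing entry forces the remaining suffix to be $\bot$—matches exactly one summand $I(A)^i A \{\bot\}^{k-i}$ determined by the position $i$ of that first non-cancellative coordinate, and conversely. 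This positional characterisation, that $i$ is the length of the maximal cancellative prefix, is the conceptual content that makes both formulations agree, and once it is made explicit the equality follows without further calculation.
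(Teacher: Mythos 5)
Your proposal is correct and follows essentially the same route as the paper: induction on $k$, unfolding $Lex_{k+1}(A) = I(A)\,Lex_k(A) \cup C(A)\{\bot\}^k$, reindexing the prepended union, and splitting the missing $i=0$ summand via $A = I(A) \cup C(A)$. One local slip is worth fixing: you claim the leftover piece $I(A)\{\bot\}^k$ arises from the $j=k$ slice $I(A)^k A$, but that containment would require $\bot \in I(A)$, which fails in general (e.g.\ in the tropical CLM $\langle \mathbb N \cup \{\infty\}, \geq, +, 0\rangle$, where $\bot = \infty$ is collapsing). The paper absorbs it instead into the $j=1$ slice: $I(A)\{\bot\}^{k} \subseteq I(A)\, A\, \{\bot\}^{k-1}$, reading the second coordinate $\bot$ as the generic $A$-factor — after which no further case analysis or double inclusion is needed, and the inductive step is a three-line chain of equalities. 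Your fallback positional argument (classifying tuples by the length of the maximal cancellative prefix) does also close the gap, though your claim that each tuple matches \emph{exactly one} summand is an overstatement — the summands overlap, e.g.\ $a\bot^k$ with $a \in I(A)$ lies in both the $i=0$ and $i=1$ summands — but only membership in some summand is needed, so this costs nothing.
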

\begin{proof}
	The proof goes by induction on $k$. For $k = 0$ it is obvious.
	Let assume it to be true by induction for $k = n$, that is,
	$Lex_{n+1}(A) = \bigcup_{i \leq n}I(A)^i A \{\bot\}^{n-i}$ .
	Now, 
	$Lex_{n+2}(A) =  I(A) Lex_{n+1}(A) \cup C(A) \{\bot\}^{n+1} =
	\bigcup_{i \leq n}I(A)^{i+1} A \{\bot\}^{n-i} \cup C(A) \{\bot\}^{n+1} =
	\bigcup_{i \leq n+1}I(A)^{i} A \{\bot\}^{n+1-i}$
	and we are done, where the latter equality holds since
	$I(A)\{\bot\}^{n+1} \in \bigcup_{i \leq n}I(A)^{i+1} A \{\bot\}^{n-i}$
	and $A = I(A) \cup C(A)$.
\end{proof}

Now, given a tuple $a$ of elements in $A^k$, for $i \leq k$ we denote with $a_i$ its 
$i$-th component and with $a_{\mid i}$ its prefix $a_1 \ldots a_i$, with the obvious 
generalisation for a set $X \subseteq A^k$, noting that $a_1 = a_{\mid 1}$.

\begin{theorem}\label{theo:lexiSLM}
	Let $\langle A, \leq, \monop, \1 \rangle$ be a finitely distributive SLM (distributive CLM).
	Then so is $\langle Lex_k(A), \leq_k, \monop^k, \1^k \rangle$ for all $k$.
\end{theorem}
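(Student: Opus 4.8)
The plan is to induct on $k$. The base case $k=1$ is immediate, since $Lex_1(A) = A$ with $\leq_1 = \leq$ is finitely distributive (distributive) by hypothesis. For the inductive step I would assume the statement for $Lex_k(A)$; by the Remark following Definition~\ref{dist} this already gives that $\monop^k$ is monotone and that $\bot^k$ is the zero of $\monop^k$, two facts I will use freely. Throughout I write a tuple of $Lex_{k+1}(A)$ as $\langle x_1, x'\rangle$ with $x_1 \in A$ and $x' \in Lex_k(A)$, and I treat the SLM and CLM cases uniformly by letting $X$ range over finite (resp. arbitrary) subsets.

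The first task is the existence of LUBs. Given $X \subseteq Lex_{k+1}(A)$, I set $s = \bigvee \{x_1 \mid x \in X\}$, which exists in $A$, and $X_s = \{x \in X \mid x_1 = s\}$. The candidate LUB is $\langle s, \bot^k\rangle$ when either $X_s = \emptyset$ or $s \in C(A)$, and $\langle s, \bigvee \{x' \mid x \in X_s\}\rangle$ when $X_s \neq \emptyset$ and $s \in I(A)$, the inner LUB existing in $Lex_k(A)$ by induction. The crucial point, which is exactly what the carrier restriction $Lex_{k+1}(A) = I(A)\,Lex_k(A) \cup C(A)\{\bot\}^k$ buys us, is that this candidate always lies in $Lex_{k+1}(A)$: when $s \in C(A)$ the element $\langle s, \bot^k\rangle$ sits in $C(A)\{\bot\}^k$, and when $s \in I(A)$ both $\langle s, \bot^k\rangle$ and $\langle s, \bigvee\{x'\}\rangle$ sit in $I(A)\,Lex_k(A)$. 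A short verification then shows it is least: any upper bound $u$ has $u_1 \geq s$, and if $u_1 = s$ the lexicographic condition forces $u'$ to dominate the relevant tails. Note that when $s \in C(A)$ is attained, the defining union forces every element of $X_s$ to equal $\langle s, \bot^k\rangle$, keeping this case consistent.

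Before distributivity I would record that $\monop^{k+1}$ is monotone. Given $x \leq_{k+1} y$ and any $z$, monotonicity of $\monop$ in $A$ yields $x_1 \monop z_1 \leq y_1 \monop z_1$; if this is strict we are done, and if $x_1 \monop z_1 = y_1 \monop z_1$ while $x_1 < y_1$, then $z_1$ collapses $x_1$ and $y_1$, so $z_1 \in C(A)$, whence $z' = \bot^k$ and both tails $x' \monop^k z'$ and $y' \monop^k z'$ equal $\bot^k$; thus $x \monop^{k+1} z = y \monop^{k+1} z$. The remaining case $x_1 = y_1$, $x' \leq_k y'$ is immediate from monotonicity of $\monop^k$. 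In particular the inequality $\bigvee (a \monop^{k+1} X) \leq_{k+1} a \monop^{k+1} \bigvee X$ then holds for free.

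For the reverse inequality, write $L = \bigvee X$ and $R = \bigvee (a \monop^{k+1} X)$. Distributivity in $A$ gives $(a \monop^{k+1} L)_1 = a_1 \monop s = R_1$; call this common value $s^\ast$. The two sides already agree on the first coordinate, so it remains to compare tails, and here the case analysis is driven by $C(A)$ being a prime ideal. If $s^\ast \in C(A)$, then $a \monop^{k+1} L$ and $R$, both living in $Lex_{k+1}(A)$ with first coordinate in $C(A)$, must lie in $C(A)\{\bot\}^k$, so both tails are $\bot^k$ and the two sides coincide. If $s^\ast \in I(A)$, primality forces $a_1, s \in I(A)$; cancellativity of $a_1$ then gives $a_1 \monop x_1 = s^\ast \iff x_1 = s$, so the elements of $a \monop^{k+1} X$ attaining the top first coordinate are exactly $a \monop^{k+1} X_s$. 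The tail of $R$ is therefore $\bigvee (a' \monop^k \{x' \mid x \in X_s\})$ while the tail of $a \monop^{k+1} L$ is $a' \monop^k \bigvee \{x' \mid x \in X_s\}$, or $a' \monop^k \bot^k = \bot^k$ when $X_s = \emptyset$, and these are equal by the inductive distributivity of $Lex_k(A)$. The main obstacle is precisely this bookkeeping around collapsing: the argument hinges on the carrier restriction sending a supremum with collapsing first coordinate to a $\bot^k$ tail, and on primality reducing the cancellative case to the inductive hypothesis.
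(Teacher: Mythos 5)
Your proof is correct, and it rests on the same two pillars as the paper's: a lexicographic LUB whose head is $\bigvee X_1$ and whose tail is determined by the elements attaining that head (with $\bot$-padding when the head is collapsing or not attained), and a cancellative-versus-collapsing dichotomy for distributivity. The organization, however, is genuinely different. The paper fixes $k$, defines the LUB coordinatewise via $(\bigvee X)_{i+1} = \bigvee \{ y \mid (\bigvee X)_1 \ldots (\bigvee X)_i\, y \in X_{\mid i+1}\}$, and proves $(\bigvee a \monop^k X)_n = a_n \monop (\bigvee X)_n$ by induction on the coordinate index $n$, invoking distributivity of $A$ at every coordinate; you instead induct on the tuple length $k$, splitting tuples as head plus tail, so the entire tail comparison is discharged in one stroke by the inductive hypothesis for $Lex_k(A)$, and distributivity of the base monoid is used only at the head. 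Your route makes explicit two points the paper leaves implicit: that the candidate LUB actually lies in the restricted carrier $I(A)\,Lex_k(A) \cup C(A)\{\bot\}^k$ (the paper merely asserts $\bigvee X$ is ``clearly a suitable candidate''), and that the collapsing case reduces to $\bot^k$ being the zero of $\monop^k$, which you correctly justify via the remark that finite distributivity yields the zero law. One attribution should be fixed: when you deduce $a_1, s \in I(A)$ from $s^\ast = a_1 \monop s \in I(A)$, the property you are using is that $C(A)$ is an \emph{ideal} (absorption: $c \in C(A)$ implies $a \monop c \in C(A)$, read contrapositively), not primality; primality of $C(A)$ --- equivalently, $I(A)$ being a submonoid --- is instead what you implicitly need for $Lex_{k+1}(A)$ to be closed under $\monop^{k+1}$, e.g.\ when both heads are cancellative. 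This is a naming slip, not a gap: both properties are supplied by Lemma~\ref{ideal}, and the proof goes through as written.
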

\begin{proof}
	As a first step, we define the LUB $\bigvee X$ of a set $X \subseteq Lex_k(A)$,
	given inductively as
	\begin{itemize}
		\item $(\bigvee X)_1 = \bigvee X_1 = \bigvee \{ y \mid y \in X_{\mid 1}\}$
		\item $(\bigvee X)_{i+1} = \bigvee \{ y \mid (\bigvee X)_1 \ldots (\bigvee X)_i y \in X_{\mid i+1}\}$
	\end{itemize}
	
	Now, $\bigvee X$ is clearly a suitable candidate, since $x \leq_k \bigvee X$ for all 
	$x \in X$. Minimality is proved inductively by exploiting the analogous 
	definition of $\leq_k$.
	
	Concerning distributivity, we need to show that
	$(\bigvee a \otimes^k X)_n = a_n \otimes (\bigvee X)_n$ holds for all $n \leq k$.
	We proceed by induction on $n$. If $n = 1$, this boils down to the distributivity of 
	the underlying monoid, since
	$$(\bigvee a \otimes^k X)_1 = \bigvee (a \otimes^k X)_1 = \bigvee \{ y \mid y \in (a \otimes^k X)_{\mid 1}\} =$$
         $$\bigvee \{ y \mid \exists z \in X_{\mid 1}.\ y = a_1 \otimes z\} = \bigvee \{ a_1 \otimes z \mid z \in X_{\mid 1}\}$$
	and 
	$$ a_1 \otimes (\bigvee X)_1 = a_1 \otimes \bigvee \{ y \mid y \in X_{\mid 1}\}  = \bigvee \{ a_1 \otimes y \mid y \in X_{\mid 1}\}$$
	
	Now, let us assume that it holds for $n$. Now we have 
	$$(\bigvee a \otimes^k X)_{n+1} = \bigvee \{ y \mid (\bigvee a \otimes^k X)_1 \ldots (\bigvee a \otimes^k X)_n y \in (a \otimes^k X)_{\mid n+1}\} =$$
	$$ \bigvee \{ y \mid \exists z \in X_{n+1}.\ y = a_{n+1} \otimes z \wedge a_1 \otimes (\bigvee X)_1 \ldots a_n \otimes (\bigvee X)_n y \in (a \otimes^k X)_{\mid n+1}\} =$$
	$$ \bigvee \{ a_{n+1} \otimes z \mid (\bigvee X)_1 \ldots (\bigvee X)_n z \in X_{\mid n+1}\}$$
	The latter equality obviously holds if $a_1 \ldots a_n$ are cancellative, yet it holds also otherwise since in that case 
	$a_{n+1} = \bot$, hence both sides coincide with $\bot$. Finally
	$$a_{n+1} \otimes (\bigvee X)_{n+1} = a_{n+1} \otimes \bigvee \{ y \mid (\bigvee X)_1 \ldots (\bigvee X)_n y \in X_{\mid n+1}\}  =$$
	 $$ \bigvee \{ a_{n+1} \otimes y \mid (\bigvee X)_1 \ldots (\bigvee X)_n y \in X_{\mid n+1}\}$$
\end{proof}

\subsection{On lexicographic residuation}
The fact that $Lex_k(A)$ is a CLM if so is $A$ tells us that $Lex_k(A)$ is
also residuated. 

\begin{example}
	Let us consider the usual tropical CLM of natural numbers with inverse order,
	and the CLM $Lex_{2}(\mathbb N)$. Clearly $C(\mathbb N) = +\infty$. We then have for example
	that
	\[
	(3, 6) \odiv_2 (4, 2) = \bigvee \{(x, y) \mid (4 + x, 2 + y) \leq_2 (3, 6)\} = (0,0)
	\]
	Indeed, $(4 + x, 2 + y) \leq_2 (3, 6)$ holds for any possible choice of $(x, y)$,
	since $4 + x < 3$ for all $x$, hence $(0,0)$ as the result.

	Note that  for the CLM obtained via the Cartesian product $\mathbb N \times \mathbb N$, the 
	result would have been $(0, 4)$.
\end{example}

Indeed, this can be proved in general for POMs. First, we need some additional definitions and technical lemmas.

\begin{definition}
	Let $\langle A, \leq, \monop, \odiv, \1 \rangle$
	be a residuated POM with bottom and $a, b \in Lex_k(A)$. Then 
	\begin{itemize}
		\item $\gamma(a, b) = min\{ i \mid (a_i \odiv b_i) \in C(A)\}$
		\item $\delta(a, b) = min\{ i \mid (a_i \odiv b_i) \otimes b_i < a_i\}$
	\end{itemize}
	with the convention that the result is $k+1$ whenever the set if empty.
\end{definition}


\begin{lemma}\label{limit}
	Let $\langle A, \leq, \monop, \odiv, \1 \rangle$
	be a residuated POM with bottom and $a, b \in Lex_k(A)$. Then 
	either $\delta(a,b) = k+1$ or $\delta(a,b) \leq \gamma(a,b)$.
\end{lemma}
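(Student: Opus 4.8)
The plan is to argue by contradiction against the negation of the disjunction. Since $\delta(a,b) \leq k+1$ holds by the stated convention, the statement can only fail when $\delta(a,b) \leq k$ and $\gamma(a,b) < \delta(a,b)$, so I would assume exactly this, abbreviating $\gamma = \gamma(a,b)$ and $\delta = \delta(a,b)$ with $\gamma < \delta \leq k$, and derive an impossibility. Observe first that $\gamma < \delta \leq k$ forces $\gamma \leq k-1$, so $\gamma$ is a genuine index rather than the empty-set value $k+1$; hence $a_\gamma \odiv b_\gamma \in C(A)$ by the very definition of $\gamma$.

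Next I would record the elementary residuation fact that $(a_i \odiv b_i) \monop b_i \leq a_i$ for every $i$: instantiating the adjunction $b \monop c \leq a \iff c \leq a \odiv b$ at $c = a_i \odiv b_i$ gives $b_i \monop (a_i \odiv b_i) \leq a_i$, and commutativity rewrites the left-hand side. Consequently, \emph{not} having a strict failure at index $i$ means precisely $(a_i \odiv b_i) \monop b_i = a_i$. Since $\delta$ is the least strict-failure index and $\gamma < \delta$, index $\gamma$ is not a strict failure, so $(a_\gamma \odiv b_\gamma) \monop b_\gamma = a_\gamma$.

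The key step is to promote the collapsing of the \emph{residual} to collapsing of the \emph{component} $a_\gamma$. Because $a_\gamma \odiv b_\gamma \in C(A)$ and $C(A)$ is an ideal (Lem.~\ref{ideal}), the product $b_\gamma \monop (a_\gamma \odiv b_\gamma)$ lies in $C(A)$; by the equality just established this product equals $a_\gamma$, whence $a_\gamma \in C(A)$. Now I would invoke the structure of $Lex_k(A)$: membership there means that once a collapsing component appears, every later component is $\bot$. This is immediate from the decomposition $Lex_k(A) = \bigcup_{i \leq k-1} I(A)^i A \{\bot\}^{k-1-i}$, in which the only slots permitted to carry non-cancellative elements are the single $A$-slot and the trailing $\bot$-slots. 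Hence $a_j = \bot$ for all $j > \gamma$, and in particular $a_\delta = \bot$ since $\delta > \gamma$.

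Finally I reach the contradiction: at index $\delta$ the strict-failure condition reads $(a_\delta \odiv b_\delta) \monop b_\delta < a_\delta = \bot$, which is impossible because $\bot$ is the bottom of the order and nothing lies strictly below it. This rules out $\gamma < \delta \leq k$ and establishes the dichotomy. I expect the only delicate point to be the promotion step $a_\gamma \in C(A)$ and its exploitation via the $Lex_k(A)$ decomposition; the remainder is a routine unwinding of the definitions of $\gamma$, $\delta$, and residuation.
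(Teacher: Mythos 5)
Your proof is correct and follows essentially the same route as the paper's: assume $\gamma(a,b) < \delta(a,b) \leq k$, use $\gamma < \delta$ to upgrade $(a_\gamma \odiv b_\gamma) \monop b_\gamma \leq a_\gamma$ to an equality, invoke the ideal property of $C(A)$ to conclude $a_\gamma \in C(A)$, let the structure of $Lex_k(A)$ force $a_\delta = \bot$, and contradict the strict failure at $\delta$ since nothing lies below $\bot$. You merely spell out the steps (the residuation inequality, the $\bigcup_{i} I(A)^i A \{\bot\}^{k-1-i}$ decomposition, and the final contradiction) that the paper's terser proof leaves implicit.
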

\begin{proof}
	If $\gamma(a,b) < \delta(a,b) \leq k$ then 
	$a_{\gamma(a,b)}  = 
	(a_{\gamma(a,b)} \odiv b_{\gamma(a,b)}) \otimes b_{\gamma(a,b)}$.
	Since $C(A)$ is an ideal of $A$, it holds that $a_{\gamma(a,b)} \in C(A)$, 
	which in turn implies that $a_{\delta(a,b)} = \bot$,
	hence a contradiction.
\end{proof}

We can then present the definition of residuation for lexicographic POMs only
for the cases identified by the proposition above.

\begin{proposition}
	\label{div0}
	Let $\langle A, \leq, \monop, \odiv, \1 \rangle$
	be a residuated POM with bottom and $a, b \in Lex_k(A)$.
	If $\delta(a,b) = \gamma(a,b) = k+1$
	then their residuation $a \odiv_k b$ 
	in $Lex_k(A)$ exists and it is given by 
	\[
	(a_1 \odiv b_1) \ldots  (a_k \odiv b_k) 
	\]
\end{proposition}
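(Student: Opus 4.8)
The plan is to show that the point-wise tuple $c = (a_1 \odiv b_1) \ldots (a_k \odiv b_k)$ is a legitimate element of $Lex_k(A)$ and then verify the defining adjunction $b \otimes^k d \leq_k a \iff d \leq_k c$ for every $d \in Lex_k(A)$. First I would check membership: the hypothesis $\gamma(a,b) = k+1$ says that no index $i \leq k$ has $a_i \odiv b_i \in C(A)$, so every component $c_i = a_i \odiv b_i$ is cancellative, and a tuple with only cancellative entries belongs to $Lex_k(A)$. Next, in any residuated POM one has $(a_i \odiv b_i) \otimes b_i \leq a_i$ (instantiate the adjunction with $c := a_i \odiv b_i$), and the hypothesis $\delta(a,b) = k+1$ upgrades this to the equality $b_i \otimes c_i = a_i$ for all $i$; since $\otimes^k$ is point-wise, this gives $b \otimes^k c = a$, so that $c$ is in fact an exact solution of $b \otimes^k x = a$, not merely a sub-solution.

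For the forward implication ($b \otimes^k d \leq_k a \Rightarrow d \leq_k c$) I would work at the first index $m$ where $d$ and $c$ differ (if there is none, $d = c$ and we are done). Writing $j$ for the first index where $b \otimes^k d$ and $a = b \otimes^k c$ differ, a short case analysis over $m < j$, $m = j$, and the impossible $m > j$ shows that in every surviving case $b_m \otimes d_m \leq a_m = b_m \otimes c_m$; the base adjunction in $A$ then yields $d_m \leq c_m$, and since $d_m \neq c_m$ this gives $d_m < c_m$, i.e. $d \leq_k c$. Note that this direction uses only residuation in $A$ together with the equality $b \otimes^k c = a$, and not cancellativity.

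For the reverse implication ($d \leq_k c \Rightarrow b \otimes^k d \leq_k a$) the cleanest route is to establish monotonicity of $\otimes^k$ on $Lex_k(A)$ and combine it with $b \otimes^k c = a$. Comparing $b \otimes^k d$ with $b \otimes^k c$ at the first index $m$ where $d_m < c_m$, monotonicity of $\otimes$ in $A$ gives $b_m \otimes d_m \leq b_m \otimes c_m$; if this is strict we are immediately finished. The delicate case is $b_m \otimes d_m = b_m \otimes c_m$ with $d_m < c_m$, which forces $b_m \in C(A)$. Here the construction pays off: since $b \in Lex_k(A)$, a non-cancellative entry must be followed only by $\bot$, so $b_i = \bot$ for every $i > m$. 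Using that $\bot$ is a zero of the monoid, which itself follows from residuation (from $\bot \leq \bot \odiv b_m$ one gets $b_m \otimes \bot \leq \bot$), both $b \otimes^k d$ and $b \otimes^k c$ collapse to $\bot$ beyond position $m$ while agreeing up to and including $m$; hence they are equal and $a fortiori$ $\leq_k$-comparable.

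The hard part is exactly this last, collapsing case: over the full Cartesian product $A^k$ it would break monotonicity, and hence residuation, and the only reason it does not break here is the defining restriction of $Lex_k(A)$ recorded in Lemma~\ref{ideal}, namely that a non-cancellative entry is always followed by $\bot$. Assembling the membership check, the two implications, and the auxiliary facts that $\bot$ is a zero and that $C(A)$ is a (prime) ideal then completes the argument.
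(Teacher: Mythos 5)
Your proof is correct and takes essentially the same route as the paper's: membership of the candidate in $I(A)^k \subseteq Lex_k(A)$ via $\gamma(a,b)=k+1$, a first-difference case analysis for both directions of the adjunction, and the delicate collapsing case ($b_m \in C(A)$) resolved exactly as in the paper, through the padding structure of $Lex_k(A)$ forcing $b_p = \bot$ for $p > m$ together with $\bot$ being a zero; your only reorganization is to front-load the exact-division identity $b \otimes^k (a \odiv_k b) = a$ obtained from $\delta(a,b)=k+1$, which the paper uses implicitly position-by-position (via $l < \delta(a,b)$) and states explicitly only as a remark after its proof. One small slip worth fixing: the fact that a non-cancellative entry is followed only by $\bot$ is the defining restriction of $Lex_k(A)$ in Proposition~\ref{def:lexilist}, not Lemma~\ref{ideal}, whose role here is rather to supply the ideal property of $C(A)$.
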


\begin{proof}
First of all, note that 
$a \odiv_k b \in I(A)^k \subseteq Lex_k(A)$.
%
So, given $c \in Lex_k(A)$, we need to prove that
$b\otimes^k c \leq_k a$ iff $c \leq_k a \odiv_k b$.
	
\begin{description}
	\item[$\mathbf{[b\otimes^k c \leq_k a].}$]
	Let $l = min \{ i \mid b_i \otimes c_i < a_i\}$,
	with the convention that the result is $k+1$ whenever the set if empty.
	Also, let $m = min\{l, k\} < \delta(a,b)$.

        We have that $b_j \otimes c_j \leq a_j$ for all $j \leq m$, hence
	$c_j \leq a_j \odiv b_j $ for all $j \leq m$. 
	If $c_n < a_n \odiv b_n$ for some $n \leq m$
	we are done. 
	Otherwise, if $m = k < l$ then 
	$c_n = a_n \odiv b_n$ for all $n \leq k$ and we are done.
	Finally, if $m = l \leq k $ then 
	$b_l \otimes c_l = b_l  \otimes (a_l \odiv b_l) = a_l$
	since $l < \delta(a,b)$, hence a contradiction.
		
	\item[$\mathbf{[c \leq_k a \odiv_k b].}$]
	Let $l = min \{ i \mid c_i < (a \odiv_k b)_i\}$,
	with the convention that the result is $k+1$ whenever the set if empty.
	Also, let $m = min \{l, k\} < \delta(a,b)$.
	
        We have that $c_j \leq (a \odiv_k b)_j = a_j \odiv b_j$ for all $j \leq m$, hence
	$b_j \otimes c_j \leq a_j$ for all $j \leq m$.
	If $b_n \otimes c_n < a_n$ for some $n \leq m$
	we are done. 
	Otherwise, if $m = k < l$ then 
	$b_n \otimes c_n = a_n$ for all $n \leq k$	and we are done.
	Finally, if $m = l \leq k$ then 
	$b_l \otimes c_l = a_l = b_l \otimes (a_l \odiv b_l)$
	since $l < \delta(a,b)$, hence either a contradiction
	if $b_l \in I(A)$ or $b_p \otimes c_p = \bot = a_p$ for all $p > l$
	if $b_l \in C(A)$ and consequently $a_l \in C(A)$.
\end{description}
\end{proof}

Note that $a \odiv_k b$ here coincides with the residuation $a \odiv^k b$ 
on the Cartesian product. Furthermore, we have 
that $(a \odiv_k b) \otimes^k b = a$.

\begin{proposition}
	\label{div1}
	Let $\langle A, \leq, \monop, \odiv, \1 \rangle$
	be a residuated POM with bottom and $a, b \in Lex_k(A)$.
	If $\delta(a,b) < \gamma(a,b)$ then their residuation $a \odiv_k b$ 
	in $Lex_k(A)$ exists and it is given by
	\[
	(a_1 \odiv b_1) \ldots  (a_{\delta(a, b)}  \odiv b_{\delta(a, b)}) 
	(\bigvee Lex_{k - \delta(a, b)}(A))
	\]
\end{proposition}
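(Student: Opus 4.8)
The plan is to set $d = \delta(a,b)$ and $r = (a_1 \odiv b_1) \ldots (a_d \odiv b_d)\, t$ with $t = \bigvee Lex_{k-d}(A)$ the top of $Lex_{k-d}(A)$ (an empty tail when $d = k$; note $d \leq k$ since $\delta(a,b) < \gamma(a,b) \leq k+1$), and to prove directly the defining adjunction: for every $c \in Lex_k(A)$, $b \otimes^k c \leq_k a$ if and only if $c \leq_k r$. Since that adjunction characterises $a \odiv_k b$ as the greatest $c$ with $b \otimes^k c \leq_k a$, establishing it for all $c$ simultaneously proves both that the residual exists and that it equals $r$.

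First I would collect the ingredients. As $d < \gamma(a,b)$, each prefix entry $a_i \odiv b_i$ with $i \leq d$ lies in $I(A)$, so $r$ consists of a cancellative prefix followed by $t$; membership $r \in Lex_k(A)$ then follows once we know the shape of $t$, which by the LUB description in Theorem~\ref{theo:lexiSLM} is $(\top, \ldots, \top)$ if $\top \in I(A)$ and $(\top, \bot, \ldots, \bot)$ if $\top \in C(A)$, where $\top = \bigvee A$ exists by Lemma~\ref{someProps}. The facts I would reuse throughout are: $\bot$ is a zero for $\otimes$ and $a \odiv \bot = \top$ (Lemma~\ref{someProps}); $C(A)$ is an ideal, so in any tuple of $Lex_k(A)$ all entries past the first collapsing one equal $\bot$ (Lemma~\ref{ideal}); the definition of $\delta$, giving $(a_i \odiv b_i) \otimes b_i = a_i$ for $i < d$ and $(a_d \odiv b_d) \otimes b_d < a_d$; and the component-wise residuation adjunction of $A$.

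For $c \leq_k r \Rightarrow b \otimes^k c \leq_k a$, I would take $l$ to be the least index with $c_l < r_l$ (and $l = k+1$ if $c = r$) and compare $b \otimes^k c$ with $a$ around $\min(l,d)$. For $i < \min(l,d)$ both equal $b_i \otimes (a_i \odiv b_i) = a_i$; the three positions $l < d$, $l = d$, $l > d$ are then settled by monotonicity of $\otimes$ together with the strictness at $d$, the only wrinkle being that when the relevant $b_i$ is collapsing one cannot cancel and instead notes that $a$, $b$, hence $b \otimes^k c$, all carry a $\bot$-tail from $i+1$ on, so the two sides coincide there. The converse $b \otimes^k c \leq_k a \Rightarrow c \leq_k r$ is dual: with $l$ the least index where $b_l \otimes c_l < a_l$, the component-wise adjunction yields $c_i \leq a_i \odiv b_i = r_i$ for $i \leq \min(l,d)$, while for $i > d$ the bound $c_i \leq r_i$ is automatic where $r_i = \top$ and is forced by collapsing (both tails being $\bot$) where $r_i = \bot$; locating the first index of genuine difference then gives $c \leq_k r$.

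The main obstacle, exactly as in Proposition~\ref{div0}, is the bookkeeping for collapsing components. Whenever a pertinent $b_i$ or $a_i \odiv b_i$ fails to be cancellative, the clean ``cancel and proceed'' step is unavailable and must be replaced by the observation that collapsing propagates a $\bot$-tail through $a$, $b$ and $c$ at once, so that both lexicographic comparisons stabilise at the same index. Dovetailing this case split with the main line — and in particular pinning down the precise form of $t$ so that the tail comparison against $c$ succeeds in both the $\top \in I(A)$ and $\top \in C(A)$ regimes — is the delicate part; everything else is routine component-wise residuation and monotonicity.
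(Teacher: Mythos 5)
Your proposal is correct and takes essentially the same route as the paper's own proof: both establish the adjunction $b \otimes^k c \leq_k a \iff c \leq_k a \odiv_k b$ directly, splitting on the first index of strict difference relative to $\delta(a,b)$, and using the componentwise adjunction, monotonicity of $\otimes$, the strictness $(a_{\delta(a,b)} \odiv b_{\delta(a,b)}) \otimes b_{\delta(a,b)} < a_{\delta(a,b)}$, and the fact that a collapsing entry forces a $\bot$-tail via the ideal property of $C(A)$. Your explicit case analysis on the shape of $\bigvee Lex_{k-\delta(a,b)}(A)$ (namely $\top^{k-\delta(a,b)}$ versus $\top\bot^{k-\delta(a,b)-1}$) only spells out a tail comparison that the paper leaves implicit and records separately right after the proposition.
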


\begin{proof}
First of all, note that 
$(a \odiv_k b)_{\mid \delta(a,b)} \in I(A)^{\delta(a,b)} \subseteq Lex_{\delta(a,b)}(A)$.
%
So, given $c \in Lex_k(A)$, we need to prove that
$b\otimes^k c \leq_k a$ iff $c \leq_k a \odiv_k b$.
	
\begin{description}
	\item[$\mathbf{[b\otimes^k c \leq_k a].}$]
	Let $l = min \{ i \mid b_i \otimes c_i < a_i\}$,
	with the convention that the result is $k+1$ whenever the set if empty.
	Also, let $m = min \{l, \delta(a,b)\}$.
	
        We have that $b_j \otimes c_j \leq a_j$ for all $j \leq m$, hence
	$c_j \leq a_j \odiv b_j $ for all $j \leq m$. 
	If $c_n < a_n \odiv b_n$ for some $n \leq m$
	we are done. 
	Otherwise, if $m = \delta(a,b) \leq l$ then 
	$c_j \leq a_j \odiv b_j $ for all $j \leq \delta(a,b)$ and we are done.
	Finally, if $m = l < \delta(a,b)$ then 
	$b_l \otimes c_l = b_l \otimes (a_l \odiv b_l) = a_l$,
	hence a contradiction.
		
	\item[$\mathbf{[c \leq_k a \odiv_k b].}$]
	Let $l = min \{ i \mid c_i < (a \odiv_k b)_i\}$,
	with the convention that the result is $k+1$ whenever the set if empty.
	Also, let $m = min \{l, \delta(a,b)\}$.
	
        We have that $c_j \leq (a \odiv_k b)_j = a_j \odiv b_j$ for all $j \leq m$, hence
	$b_j \otimes c_j \leq a_j$ and $b_j \otimes c_j \leq b_j \otimes (a_j \odiv b_j)$ 
	for all $j \leq m$, the latter by monotonicity of $\otimes$.
	If $b_n \otimes c_n < a_n$ for some $n \leq m$
	we are done. 
	Otherwise, if $m = \delta(a,b) \leq l$ then 
	$b_{\delta(a,b)} \otimes (a_{\delta(a,b)} \odiv b_{\delta(a,b)}) <
	a_{\delta(a,b)} = b_{\delta(a,b)} \otimes c_{\delta(a,b)} \leq
	b_{\delta(a,b)} \otimes (a_{\delta(a,b)} \odiv b_{\delta(a,b)})$,
	hence a contradiction.
	Finally, if $m = l < \delta(a,b)$ then 
	$b_l \otimes c_l = a_l = b_l \otimes (a_l \odiv b_l)$, 
	hence either a contradiction
	if $b_l \in I(A)$ or $b_p \otimes c_p = \bot = a_p$ for all $p > l$
	if $b_l \in C(A)$ and consequently $a_l \in C(A)$.
\end{description}
\end{proof}

Additionally, please note that $\bigvee Lex_n(A)$ can be easily characterised: 
it coincides with $\top^n$
if $\top \in I(A)$, and with $\top \bot^{n -1}$ otherwise.

\begin{proposition}
	\label{div2}
	Let $\langle A, \leq, \monop, \odiv, \1 \rangle$
	be a residuated POM with bottom element $\bot$ and $a, b \in Lex_k(A)$.
	If $\delta(a,b) = \gamma(a,b) \leq k$ or $\delta(a,b) >  \gamma(a,b)$
	then their residuation $a \odiv_k b$ 
	in $Lex_k(A)$ exists and it is given by
	\[
	(a_1 \odiv b_1) \ldots  (a_{\gamma(a, b)}  \odiv b_{\gamma(a, b)}) 
	\bot^{k - \gamma(a, b)}
	\]
\end{proposition}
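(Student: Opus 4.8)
The plan is to follow the template of Propositions~\ref{div0} and~\ref{div1}, but now with the cut-off given by $\gamma(a,b)$ instead of $\delta(a,b)$. Write $g = \gamma(a,b)$ and $d = \delta(a,b)$, and let $r = (a_1 \odiv b_1) \ldots (a_g \odiv b_g) \bot^{k-g}$ be the candidate residual. First I would record that $g \leq k$ in both sub-cases: if $d = g$ this is assumed, and if $d > g$ then Lemma~\ref{limit} forces $d = k+1$, whence $g < k+1$. I would then check $r \in Lex_k(A)$: by minimality of $g$ the components $a_i \odiv b_i$ lie in $I(A)$ for $i < g$ while $a_g \odiv b_g \in C(A) \subseteq A$, so $r \in I(A)^{g-1}A\{\bot\}^{k-g}$, one of the summands of $Lex_k(A)$. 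I would also isolate two auxiliary facts used throughout: that $\bot$ is a zero for $\otimes$ in any residuated POM with bottom (since $a \odiv \bot = \top$ by Lemma~\ref{someProps} and $c \leq \top$ give $\bot \otimes c \leq \bot$, hence $\bot \otimes c = \bot$), and that a collapsing component of an element of $Lex_k(A)$ forces every later component to equal $\bot$.

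Next I would prove the adjunction $b \otimes^k c \leq_k a \iff c \leq_k r$ for every $c \in Lex_k(A)$, treating the two implications symmetrically. In each direction I introduce the first index $l$ at which the relevant strict inequality occurs (with the convention $l = k+1$ when none does) and set $m = \min\{l, g\}$. For every $j \leq m$ the hypothesis yields $b_j \otimes c_j \leq a_j$, equivalently $c_j \leq a_j \odiv b_j = r_j$, by residuation of the underlying POM. If a strict inequality already appears at some $n \leq m$, minimality makes all earlier components coincide and the lexicographic comparison follows at once.

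The remaining situation is that the relevant components agree up to $m$, and here the case split $m = l < g$ versus $m = g \leq l$ is the crux. When $m = l < g$ I would argue that, since $l < g \leq d$, we have $(a_l \odiv b_l) \otimes b_l = a_l$, so the assumed agreement forces $b_l \otimes c_l = a_l$; in the direction $b \otimes^k c \leq_k a \Rightarrow c \leq_k r$ this contradicts the strictness defining $l$, whereas in the converse direction it contradicts $c_l < a_l \odiv b_l$ by cancellation when $b_l \in I(A)$, and when $b_l \in C(A)$ the ideal property gives $a_l \in C(A)$, so both $a$ and $b$ have tail $\bot$ beyond $l$ and $b \otimes^k c = a$ there since $\bot$ is a zero (cf.~Lemma~\ref{ideal}). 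When $m = g \leq l$, the defining property $a_g \odiv b_g \in C(A)$ forces, via membership in $Lex_k(A)$, the tail of the pertinent tuple ($c$ in the first direction, $a$ and $b$ in the second) to equal $\bot^{k-g}$, matching the tail of $r$, so both sides of the comparison agree.

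The step I expect to be the main obstacle is precisely the behaviour at position $g$ in the direction $c \leq_k r \Rightarrow b \otimes^k c \leq_k a$, where I must show that agreement of the prefixes up to $g$ propagates to the tails; this is exactly where the two sub-cases genuinely differ. When $d > g$, the reasoning of Lemma~\ref{limit} gives $a_g = (a_g \odiv b_g) \otimes b_g \in C(A)$, so $a$ — and, since $b_g$ is then collapsing, also $b \otimes^k c$ — has tail $\bot^{k-g}$. When $d = g$, the strict drop $(a_g \odiv b_g) \otimes b_g < a_g$ together with $c_g \leq a_g \odiv b_g$ and monotonicity of $\otimes$ yields $b_g \otimes c_g < a_g$ directly, so this sub-case never reaches the ``all agree'' branch but is absorbed into the earlier strict-inequality case. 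Reconciling these two regimes under the single formula for $r$ is the delicate bookkeeping I would need to carry out carefully.
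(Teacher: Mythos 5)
Your proposal follows the paper's proof in all structural respects: the same candidate residual, the same membership check via $I(A)^{g-1}A\{\bot\}^{k-g}$ (using the alternative characterisation of $Lex_k(A)$), and the adjunction verified in both directions through the first strict index $l$ and the cut-off $m=\min\{l,g\}$. Even your ``absorption'' of the sub-case $\delta(a,b)=\gamma(a,b)\leq k$ into the strict-inequality branch is a harmless rerouting of the paper's chain $b_l\otimes(a_l\odiv b_l)\leq a_l=b_l\otimes c_l\leq b_l\otimes(a_l\odiv b_l)$, which the paper resolves inside its branch $m=l\leq\gamma(a,b)$.

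However, there is one genuinely wrong step, exactly at the point you flagged as the crux. In the direction $c\leq_k a\odiv_k b\Rightarrow b\otimes^k c\leq_k a$, case $m=g\leq l$ with $\delta(a,b)>g$, you claim that $b_g$ is collapsing (and, in your third paragraph, that $b$ itself has tail $\bot^{k-g}$). This does not follow: from $a_g=(a_g\odiv b_g)\otimes b_g\in C(A)$, primality of $C(A)$ is already witnessed by the disjunct $a_g\odiv b_g\in C(A)$ and yields nothing about $b_g$. Concretely, in $Lex_2(A)$ for the tropical CLM $A=\langle\mathbb N\cup\{\infty\},\geq,+,0\rangle$ (so $\bot=\infty$ and $C(A)=\{\infty\}$), take $a=(\infty,\infty)$ and $b=(2,5)$: then $\gamma(a,b)=1$ and $\delta(a,b)=3=k+1$, yet $b_1=2\in I(A)$ and the tail of $b$ is $5\neq\bot$. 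The conclusion you need is nevertheless true, and the repair is the primality argument you already deploy in the case $m=l<g$, applied now at position $g$: in the ``all agree'' branch you have $b_g\otimes c_g=a_g\in C(A)$, hence $b_g\in C(A)$ or $c_g\in C(A)$ by Lemma~\ref{ideal}, and either disjunct forces the corresponding tuple --- and therefore $b\otimes^k c$, since $\bot$ is a zero --- to have tail $\bot^{k-g}$; in the counterexample above it is $c_g$, not $b_g$, that must collapse. Alternatively, split on $l$ as the paper does: if $l>g$ then $c_g=r_g\in C(A)$ directly (the paper's branch $m=\gamma(a,b)<l$), while if $l=g$ then $b_g\otimes c_g=a_g=b_g\otimes r_g$ together with $c_g<r_g$ forces $b_g\in C(A)$ by cancellativity (the paper's branch $m=l\leq\gamma(a,b)$). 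With this single step repaired, your proof coincides with the paper's.
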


\begin{proof}
First of all, note that 
$(a \odiv_k b)_{\mid \gamma(a,b)} \in I(A)^{\gamma(a,b)-1} C(A) \subseteq Lex_{\gamma(a,b)}(A)$.
Also, $\delta(a,b) >  \gamma(a,b)$ implies that 
$\delta(a,b) = k+1$ and $a_{\gamma(a,b)} \in C(A)$.
%
Given $c \in Lex_k(A)$, we need to prove that
$b\otimes^k c \leq_k a$ iff $c \leq_k a \odiv_k b$.
	
\begin{description}
	\item[$\mathbf{[b\otimes^k c \leq_k a].}$]
	Let $l = min \{ i \mid b_i \otimes c_i < a_i\}$,
	with the convention that the result is $k+1$ whenever the set if empty.
	Also, let $m = min \{l, \gamma(a,b)\} \leq \delta(a,b)$.

        We have that $b_j \otimes c_j \leq a_j$ for all $j \leq m$, hence
	$c_j \leq a_j \odiv b_j $ for all $j \leq m$. 
	If $c_n < a_n \odiv b_n$ for some $n \leq m$
	we are done. 
	Otherwise, if $m = \gamma(a,b) \leq l$ then 
	$c_{\gamma(a,b)} \in C(A)$ and we are done since
	$c_p = \bot = (a \odiv_k b)_p$ for all $p > {\gamma(a,b)}$.
	Finally, if $m = l < \gamma(a,b)$ then 
	$b_l \otimes c_l = b_l \otimes (a_l \odiv b_l) = a_l$
	since $l < \delta(a,b)$, hence a contradiction.
		
	\item[$\mathbf{[c \leq_k a \odiv_k b].}$]
	Let $l = min \{ i \mid c_i < (a \odiv_k b)_i\}$,
	with the convention that the result is $k+1$ whenever the set if empty.
	Also, let $m = min \{l, \gamma(a,b)\} \leq \delta(a,b)$.
	
        We have that $c_j \leq (a \odiv_k b)_j = a_j \odiv b_j$ for all $j \leq m$, hence
	$b_j \otimes c_j \leq a_j$ and $b_j \otimes c_j \leq b_j \otimes (a_j \odiv b_j)$ 
	for all $j \leq m$, the latter by monotonicity of $\otimes$.
	If $b_n \otimes c_n < a_n$ for some $n \leq m$
	we are done. 
	Otherwise, if $m = \gamma(a,b) < l$ then 
	$c_{\gamma(a,b)} \in C(A)$ and we are done since
	$b_p \otimes c_p = \bot \leq a_p$ for all $p > {\gamma(a,b)}$.
	Finally, if $m = l \leq \gamma(a,b)$ then 
	$b_l \otimes (a_l \odiv b_l) \leq a_l  = b_l \otimes c_l \leq b_l \otimes (a_l \odiv b_l)$
	since $l \leq \delta(a,b)$, hence either a contradiction
	if $b_l \in I(A)$ or $b_p \otimes c_p = \bot = a_p$ for all $p > l$
	if $b_l \in C(A)$ and consequently $a_l \in C(A)$.
\end{description}
\end{proof}

%

From the propositions above it is straightforward to derive Th.~\ref{prop:lexiRes}, 
which states that, given a residuated POM, it is possible to define a lexicographic 
order on its tuples, which is a residuated POM as well. 

\begin{theorem}\label{prop:lexiRes}
	Let $\langle A, \leq, \monop, \odiv, \1 \rangle$ be a residuated POM with bottom
	element $\bot$.
	Then so is $\langle Lex_k(A), \leq_k, \monop^k, \odiv_k, \1^k \rangle$ for all $k$,
	with $\odiv_k$ defined as 
	\[
	a \odiv_k b = 	\begin{cases}
	                                \begin{array}{lcl}
        		                         (a_1 \odiv b_1) \ldots
	                                   (a_k  \odiv b_k) & & \mbox{if  } k+1 = \gamma(a,b) = \delta(a,b) \\
                                           (a_1 \odiv b_1) \ldots  (a_{\gamma(a, b)}  \odiv b_{\gamma(a, b)}) 	
	                                   \bot^{k - \gamma(a, b)} & & \mbox{if  } k+1 \neq \gamma(a,b) \leq \delta(a,b) \\
           	                         (a_1 \odiv b_1) \ldots  (a_{\delta(a, b)}  \odiv b_{\delta(a, b)})
	                                  (\bigvee Lex_{k-\delta(a,b)}(A)) & & \mbox{otherwise}
	                                 \end{array}
	\end{cases}
	\]
\end{theorem}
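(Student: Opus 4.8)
The plan is to read off the theorem as a mere repackaging of the three preceding propositions, since all the analytic work has already been done there. The underlying ordered-monoid structure of $Lex_k(A)$ — that $\langle Lex_k(A), \leq_k, \monop^k, \1^k\rangle$ is a POM with bottom $\bot^k$ — has already been secured in Proposition~\ref{def:lexilist}, so the only outstanding obligation is to verify that the operator $\odiv_k$ displayed in the statement is a genuine residual, i.e. that it satisfies the adjunction $b \otimes^k c \leq_k a \iff c \leq_k a \odiv_k b$ for all $a, b, c \in Lex_k(A)$. Because a residual is uniquely determined by this adjunction whenever it exists, it suffices to exhibit, for each pair $(a,b)$, the correct closed form and check the equivalence; and this is exactly what Propositions~\ref{div0}, \ref{div1} and \ref{div2} accomplish, each over a range of the control parameters $\gamma(a,b)$ and $\delta(a,b)$.

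The core of the argument is therefore purely combinatorial: I would show that the three clauses defining $\odiv_k$ partition every admissible configuration of $(\gamma(a,b), \delta(a,b))$ and that each clause reproduces the formula proved correct in one of the propositions. The single tool needed is Lemma~\ref{limit}, which constrains the pairs to satisfy either $\delta(a,b) = k+1$ or $\delta(a,b) \leq \gamma(a,b)$. Concretely, the first clause ($k+1 = \gamma(a,b) = \delta(a,b)$) is literally the hypothesis of Proposition~\ref{div0}; the third clause (the ``otherwise'' branch) I would identify with the condition $\delta(a,b) < \gamma(a,b)$ of Proposition~\ref{div1}; and the second clause ($k+1 \neq \gamma(a,b) \leq \delta(a,b)$) I would show to be equivalent to the disjunction $\delta(a,b) = \gamma(a,b) \leq k \ \lor\ \delta(a,b) > \gamma(a,b)$ governing Proposition~\ref{div2}.

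The one place that demands care — and what I regard as the main obstacle, modest as it is — is the equivalence for the second clause, since both directions must invoke Lemma~\ref{limit}. In the forward direction, if $\gamma(a,b) \leq \delta(a,b)$ with $\gamma(a,b) \leq k$, then the lemma forces either $\delta(a,b) = \gamma(a,b)$, giving the first disjunct, or $\delta(a,b) = k+1 > \gamma(a,b)$, giving the second; conversely each disjunct of Proposition~\ref{div2} is seen to satisfy $\gamma(a,b) \leq \delta(a,b)$ and $\gamma(a,b) \neq k+1$, again using that $\delta(a,b) > \gamma(a,b)$ can only occur when $\delta(a,b) = k+1$. A symmetric but easier check disposes of the ``otherwise'' branch: negating the first two clauses leaves exactly $\gamma(a,b) > \delta(a,b)$, since the remaining alternative $\gamma(a,b) = k+1$ either forces $\delta(a,b) = k+1$, which is already the first clause, or forces $\delta(a,b) \leq k$, which is again $\gamma(a,b) > \delta(a,b)$. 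Once these disjointness-and-exhaustiveness facts are recorded, the three propositions close each branch directly and the theorem follows with no further computation.
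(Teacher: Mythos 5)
Your proposal is correct and takes essentially the same route as the paper, which likewise derives Theorem~\ref{prop:lexiRes} directly from Propositions~\ref{div0}, \ref{div1} and~\ref{div2} together with Lemma~\ref{limit}, declaring the assembly straightforward. Your explicit check that the three clauses of $\odiv_k$ exhaust and match the propositions' hypotheses (in particular that $\delta(a,b) > \gamma(a,b)$ forces $\delta(a,b)=k+1$, and that the ``otherwise'' branch reduces to $\delta(a,b) < \gamma(a,b)$) merely spells out the case bookkeeping the paper leaves implicit.
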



\subsection{Infinite tuples}\label{sec:infinite}
We can now move to POMs whose elements are tuples of infinite length.

\begin{proposition}
	Let $\langle A, \leq, \monop, \1 \rangle$ be a POM with bottom element $\bot$.
	Then we can define a POM 
	$\langle Lex_\omega(A), \leq_\omega, \monop^\omega, \1^\omega \rangle$ 
	with bottom element $\bot^\omega$ such that $\monop^\omega$ is defined point-wise
	and
	
	\begin{itemize}
		\item $Lex_\omega(A) = I(A)^\omega \cup I(A)^\ast A \{\bot\}^\omega$
		\item $a \leq_\omega b$ if $a_{\leq k} \leq_k b_{\leq k}$ for all $k$
	\end{itemize}
\end{proposition}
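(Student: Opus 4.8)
The plan is to reduce every claim about $Lex_\omega(A)$ to the corresponding finite claim about $Lex_k(A)$ already established in Proposition~\ref{def:lexilist}, using the prefix maps $t \mapsto t_{\leq k}$ as the bridge. The cornerstone is a characterisation lemma:
\[
t \in Lex_\omega(A) \iff t_{\leq k} \in Lex_k(A) \text{ for all } k .
\]
I would prove the forward direction by cases on the definition $Lex_\omega(A) = I(A)^\omega \cup I(A)^\ast A\{\bot\}^\omega$: if $t \in I(A)^\omega$ then every prefix lies in $I(A)^k \subseteq Lex_k(A)$, and if $t \in I(A)^i A \{\bot\}^\omega$ then $t_{\leq k}$ is either a prefix of cancellative elements or lies in $I(A)^i A \{\bot\}^{k-1-i} \subseteq Lex_k(A)$. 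For the harder converse I would use the reformulation $Lex_k(A)=\bigcup_{i\le k-1} I(A)^i A\{\bot\}^{k-1-i}$, which yields the finite fact that once a component of a tuple in $Lex_k(A)$ is collapsing, all later components are $\bot$. Then, assuming all prefixes of $t$ lie in $Lex_k(A)$, either every $t_i \in I(A)$ (so $t \in I(A)^\omega$), or there is a least $p$ with $t_p \in C(A)$; applying the finite fact to $t_{\leq k}$ for each $k>p$ forces $t_j = \bot$ for all $j>p$, whence $t \in I(A)^{p-1} A\{\bot\}^\omega$.

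Granting the characterisation, the remaining verifications are level-wise. For closure of $\monop^\omega$ I would observe that, since both operators are point-wise, $(a \monop^\omega b)_{\leq k} = a_{\leq k} \monop^k b_{\leq k}$, which lies in $Lex_k(A)$ by Proposition~\ref{def:lexilist}; as this holds for every $k$, the converse direction of the characterisation gives $a\monop^\omega b \in Lex_\omega(A)$. Commutativity and associativity of $\monop^\omega$ are immediate from those of $\monop$, and $\1^\omega$ is the identity point-wise; it belongs to $Lex_\omega(A)$ because $\1$ is cancellative, so $\1^\omega \in I(A)^\omega$.

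For the order, $\leq_\omega$ is defined prefix-wise, so reflexivity, transitivity and antisymmetry each follow from the corresponding property of every $\leq_k$ (a PO by Proposition~\ref{def:lexilist}): in particular, if $a\leq_\omega b$ and $b\leq_\omega a$ then $a_{\leq k}=b_{\leq k}$ for all $k$, hence $a=b$. Finally, $\bot^\omega \in Lex_\omega(A)$ since $\bot^\omega_{\leq k}=\bot^k$ is the bottom of $Lex_k(A)$, and $\bot^\omega \leq_\omega a$ for every $a$ because $\bot^k \leq_k a_{\leq k}$ holds at every level.

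I expect the main obstacle to be the converse direction of the characterisation lemma, i.e. propagating the ``collapsing-then-$\bot$'' condition from all finite prefixes to the whole infinite tail while correctly separating the two shapes $I(A)^\omega$ and $I(A)^\ast A\{\bot\}^\omega$. A secondary point worth flagging is that the level-wise closure of $\monop^k$ inherited from Proposition~\ref{def:lexilist} ultimately relies on $\bot$ acting as a zero for $\monop$, so that the $\bot$-tails are preserved under the point-wise product.
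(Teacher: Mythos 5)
Your proof is correct and follows essentially the same route as the paper, which dispenses with this proposition in one line (``a straightforward adaptation of Prop.~\ref{def:lexilist}''), namely the level-wise lifting through prefixes that you carry out in full via your characterisation lemma $t \in Lex_\omega(A) \iff t_{\leq k} \in Lex_k(A)$ for all $k$. Your closing caveat --- that closure of $\monop^k$ (and hence $\monop^\omega$) tacitly needs $\bot$ to behave as a zero so that $\bot$-tails are preserved --- is well spotted, but it is an assumption already implicit in the paper's own unproved closure claim in Prop.~\ref{def:lexilist}, so your argument inherits nothing beyond what the paper itself assumes.
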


A straightforward adaptation of Prop.~\ref{def:lexilist}.
Thus, we can define a POM of infinite tuples simply by lifting the 
family of POMs of finite tuples.

\begin{remark}
	Note that the seemingly obvious POM structure cannot be 
	lifted to $\bigcup_k Lex_k(A) =  I(A)^\ast A \{\bot\}^\ast$: 
	it would be missing the identity of the monoid.
\end{remark}

\begin{proposition}\label{prop:lexiSLM}
	Let $\langle A, \leq, \monop, \1 \rangle$ be a finitely distributive SLM (distributive CLM).
	Then so is $\langle Lex_\omega(A), \leq_\omega, \monop^\omega, \1^\omega \rangle$.
\end{proposition}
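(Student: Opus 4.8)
The plan is to reduce everything to the finite case of Theorem~\ref{theo:lexiSLM} by projecting to finite prefixes, relying on the fact that $\langle Lex_\omega(A), \leq_\omega, \monop^\omega, \1^\omega \rangle$ is already known to be a POM with bottom $\bot^\omega$. Write $a_{\leq k}$ for the length-$k$ prefix of an infinite tuple $a$, and for $X \subseteq Lex_\omega(A)$ set $X_{\leq k} = \{x_{\leq k} \mid x \in X\} \subseteq Lex_k(A)$. The key observation is that the inductive LUB construction in the proof of Theorem~\ref{theo:lexiSLM} is prefix-coherent: the $i$-th component $(\bigvee Y)_i$ depends only on the length-$i$ prefixes of $Y$, so $\bigvee(X_{\leq k+1})$ and $\bigvee(X_{\leq k})$ agree on their first $k$ entries. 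I can therefore define an infinite tuple $s$ by the recursion $s_1 = \bigvee\{x_{\leq 1}\mid x \in X\}$ and $s_{i+1} = \bigvee\{y \mid s_1\ldots s_i\, y \in X_{\leq i+1}\}$, equivalently $s_{\leq k} = \bigvee(X_{\leq k})$ for every $k$. In the SLM case I take $X$ finite, so each $X_{\leq k}$ is finite and its LUB exists; in the CLM case $X$ is arbitrary and all these LUBs exist.

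The main obstacle is to check that $s$ actually lands in $Lex_\omega(A) = I(A)^\omega \cup I(A)^\ast A \{\bot\}^\omega$. For this I use the structural fact behind the characterisation $Lex_k(A) = \bigcup_{i \leq k-1} I(A)^i A \{\bot\}^{k-1-i}$: since $\bot$ is the zero of the monoid and hence collapsing, in any element of $Lex_k(A)$ every entry strictly after the first collapsing component equals $\bot$. If every $s_i$ is cancellative, then $s \in I(A)^\omega$. Otherwise let $j$ be the least index with $s_j \in C(A)$; then $s_1,\ldots,s_{j-1} \in I(A)$, so for every $k > j$ the prefix $s_{\leq k} \in Lex_k(A)$ has its first collapsing component exactly at position $j$, forcing $s_i = \bot$ for all $j < i \leq k$. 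Letting $k$ range over all integers greater than $j$ yields $s_i = \bot$ for all $i > j$, whence $s \in I(A)^{j-1} A \{\bot\}^\omega \subseteq Lex_\omega(A)$.

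It remains to verify that $s = \bigvee X$ and that distributivity holds, both lifting prefix-wise from the finite case. For the LUB property, $x \leq_\omega s$ for every $x \in X$ because $x_{\leq k} \leq_k s_{\leq k} = \bigvee(X_{\leq k})$ for all $k$; and if $y$ is any upper bound of $X$, then each $y_{\leq k}$ bounds $X_{\leq k}$, so $s_{\leq k} \leq_k y_{\leq k}$ for all $k$, i.e. $s \leq_\omega y$. For distributivity, fix $a$ and note that, as $\monop^\omega$ is point-wise, $(a \monop^\omega x)_{\leq k} = a_{\leq k} \monop^k x_{\leq k}$, so that $\bigl(\bigvee(a\monop^\omega X)\bigr)_{\leq k} = \bigvee\bigl((a\monop^\omega X)_{\leq k}\bigr) = \bigvee(a_{\leq k} \monop^k X_{\leq k})$ by prefix-coherence; hence for every $k$
\[
(a \monop^\omega \bigvee X)_{\leq k} = a_{\leq k} \monop^k \bigvee(X_{\leq k}) = \bigvee(a_{\leq k} \monop^k X_{\leq k}) = \bigl(\bigvee (a \monop^\omega X)\bigr)_{\leq k},
\]
using the distributivity of $Lex_k(A)$ from Theorem~\ref{theo:lexiSLM} in the middle step. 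Since the two sides of the desired identity share the same length-$k$ prefix for every $k$, they coincide, giving finite distributivity in the SLM case and full distributivity in the CLM case. The only genuinely delicate point is the carrier-closure argument of the second paragraph; the rest is a routine transfer from finite to infinite tuples.
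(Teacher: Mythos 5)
Your proof is correct and is essentially the paper's own argument: the paper proves this proposition by simply declaring it ``a straightforward adaptation of Th.~\ref{theo:lexiSLM}'', and your prefix-coherent LUB construction $s_{\leq k} = \bigvee(X_{\leq k})$ together with the prefix-wise transfer of distributivity is exactly that adaptation, carried out in detail. Your second paragraph (checking that the limit tuple actually lands in $I(A)^\omega \cup I(A)^\ast A \{\bot\}^\omega$, using that $\bot$ is collapsing) supplies the one genuinely new point that the infinite case adds over the finite one, which the paper leaves implicit.
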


Also a straightforward adaptation, this time of Th.~\ref{theo:lexiSLM}.

\begin{proposition}\label{prop:lexiResOmega}
	Let $\langle A, \leq, \monop, \odiv, \1 \rangle$ be a residuated POM with bottom.
	Then so is $\langle Lex_\omega(A), \leq_\omega, \monop^\omega, \odiv_\omega, \1^\omega \rangle$,
	with $\odiv_w$ defined as
	\[
	a \odiv_\omega b = \begin{cases}
	                                \begin{array}{lcl}
        		                         (a_1 \odiv b_1) \ldots
	                                   (a_k  \odiv b_k) \ldots & & \mbox{if  } \infty = \gamma(a,b) = \delta(a,b) \\
                                           (a_1 \odiv b_1) \ldots  (a_{\gamma(a, b)}  \odiv b_{\gamma(a, b)}) 	
	                                   \bot^\omega & & \mbox{if  } \infty \neq \gamma(a,b) \leq \delta(a,b) \\
           	                         (a_1 \odiv b_1) \ldots  (a_{\delta(a, b)}  \odiv b_{\delta(a, b)})
	                                  (\bigvee Lex_{\omega}(A)) & & \mbox{otherwise}
	                                 \end{array}
		\end{cases}
		\]
\end{proposition}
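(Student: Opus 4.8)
The plan is to follow the finite-length development verbatim, adjusting only the boundary conventions. First I would recast the invariants $\gamma(a,b)$ and $\delta(a,b)$ so that an empty defining set returns $\infty$ in place of $k+1$, and re-prove the analogue of Lemma~\ref{limit}: for $a,b \in Lex_\omega(A)$ either $\delta(a,b) = \infty$ or $\delta(a,b) \leq \gamma(a,b)$. Its proof is unchanged, since it uses only that $C(A)$ is an ideal together with the $\bot$-padding shape of elements of $Lex_\omega(A)$: if $\gamma(a,b) < \delta(a,b)$ with $\delta(a,b)$ finite, then $a_{\gamma(a,b)} = (a_{\gamma(a,b)} \odiv b_{\gamma(a,b)}) \otimes b_{\gamma(a,b)} \in C(A)$, which forces $a_{\delta(a,b)} = \bot$ and contradicts $(a_{\delta(a,b)} \odiv b_{\delta(a,b)}) \otimes b_{\delta(a,b)} < a_{\delta(a,b)}$. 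This lemma shows that the three clauses defining $\odiv_\omega$ — namely $\gamma = \delta = \infty$; $\gamma$ finite with $\gamma \leq \delta$; and $\delta < \gamma$ — are exhaustive and mutually exclusive, and it lets me read the first clause off Prop.~\ref{div0}, the second off Prop.~\ref{div2}, and the third off Prop.~\ref{div1}.

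Next I would check that the displayed value $a \odiv_\omega b$ lies in $Lex_\omega(A)$ in each case: in the first every coordinate $a_i \odiv b_i$ is cancellative, so the result is in $I(A)^\omega$; in the second the first $\gamma(a,b)-1$ coordinates are cancellative, the $\gamma(a,b)$-th is collapsing and the tail is $\bot^\omega$, placing the result in $I(A)^\ast A \{\bot\}^\omega$; and in the third the length-$\delta(a,b)$ prefix lies in $I(A)^{\delta(a,b)}$ while the tail $\bigvee Lex_\omega(A)$ is the top element of $Lex_\omega(A)$, which exists by Lemma~\ref{someProps} and equals $\top^\omega$ or $\top \bot^\omega$ according to whether $\top \in I(A)$. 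With well-definedness in hand, the core is the adjunction $b \otimes^\omega c \leq_\omega a \iff c \leq_\omega a \odiv_\omega b$ for every $c \in Lex_\omega(A)$. Since $\leq_\omega$ is by definition the conjunction of the $\leq_k$ over all $k$, each implication is controlled by the first coordinate at which a strict inequality appears, e.g. $l = \min\{i \mid b_i \otimes c_i < a_i\}$ with value $\infty$ when no such coordinate exists; with this single change of convention the three sub-arguments of Prop.~\ref{div0}, \ref{div1} and \ref{div2} transfer line by line, the finite thresholds $\gamma(a,b)$ and $\delta(a,b)$ localising the reasoning to a finite prefix in the last two cases, after which the $\bot^\omega$ (respectively top) tail makes the residual comparison automatic.

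The one genuinely new point, and the step I expect to be delicate, is the case $\gamma(a,b) = \delta(a,b) = \infty$, where the lexicographic comparison never terminates: here the finite proof's appeal to reaching the last coordinate $k$ (the clause $m = k < l$ in Prop.~\ref{div0}) has no counterpart. Instead I would argue directly that, since each coordinate already satisfies $b_n \otimes c_n \leq a_n$, if no coordinate yields a strict decrease $b_n \otimes c_n < a_n$ then equality holds throughout, whence $b \otimes^\omega c = a$ and both $\leq_\omega$ statements hold; while if some coordinate does yield a strict decrease, its least index provides the required strict lexicographic witness at every prefix beyond it. I would also record, as in the finite case, that on the cancellative part $a \odiv_\omega b$ agrees with the coordinatewise Cartesian-product residual, and that the only infinite supremum ever invoked is the top $\bigvee Lex_\omega(A)$, whose explicit shape is already known; thus no appeal to general infinite LUBs inside $Lex_\omega(A)$ is needed, which is exactly what keeps the argument within the POM rather than CLM hypotheses.
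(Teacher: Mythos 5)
Your proposal is correct and takes essentially the same route as the paper, whose entire proof is precisely the reduction you carry out: transfer Th.~\ref{prop:lexiRes} to infinite tuples via the extension of Lem.~\ref{limit} with the $k+1$ convention replaced by $\infty$, together with the explicit description of $\bigvee Lex_\omega(A)$ as $\top^\omega$ or $\top\bot^\omega$. Your careful treatment of the case $\gamma(a,b)=\delta(a,b)=\infty$, where the finite argument's appeal to the last coordinate has no analogue, merely fills in a detail the paper leaves implicit under ``obvious extension'' rather than diverging from its argument.
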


It follows from Th.~\ref{prop:lexiRes}, via the obvious extension of Lem.~\ref{limit}.
Note that $\bigvee Lex_\omega(A)$ is $\top^\omega$
if $\top \in I(A)$, and  $\top \bot^\omega$ otherwise.



\section{Mini-bucket elimination for residuated POMs}\label{sec:bucket}
This section shows an application of residuation to a general approximation algorithms
for soft CSPs, \emph{Mini-Bucket Elimination} (MBE)~\cite{minibucket},
a relaxation of a well-known complete inference algorithm, \emph{Bucket Elimination} (BE)~\cite{bucket}.

BE first partitions the constraints into \emph{buckets}, where the bucket
of a variable stores those constraints whose \emph{support}\footnote{The \emph{support} of a constraint is the set of variables on which assignment it depends.} contains that variable and none that is higher in the ordering: variables are previously sorted according to some criteria (e.g., just lexicographically on their names: $v_1, v_2, \dots$).  The next step is to process the buckets from top to bottom. When the bucket
of variable $v$ is processed, an \emph{elimination procedure} is performed over the constraints in its bucket, yielding a new constraint defined over all the variables mentioned in the bucket, excluding $v$. This constraint summarises the
``effect'' of $v$ on the remainder of the problem. The new constraint ends up in a lower bucket. BE  finds the preference of the optimal solution and not an approximation of it; however, BE is exponential in the induced width, which measures the aciclicity of a problem.

On the other hand, MBE takes advantage of a control parameter $z$: it partitions the buckets into smaller subsets
called \emph{mini-buckets}, such that their arity is bounded by $z$. Therefore, the cost of computing this approximation is now exponential in $z$, which allows trading off time and space for accuracy. MBE is often used for providing  bounds in branch-and-bound algorithms (see Sect.~\ref{sec:softbb}).

Algorithm~\ref{alg:soft:bucket} extends MBE to work on residuated monoids, hence including also the framework of preferences presented in 
Sect.~\ref{sec:collapsing} and Sect.~\ref{sec:lexico}.  
The algorithm takes as input a problem $P$ defined as $P = \langle V,D,C\rangle_{\mathit{POM}}$, where $V$ is the set of variables $\{v_1, \dots, v_n\}$, 
$D$ is a set of domains $\{D_1, \dots, D_n\}$ (where $v_1 \in D_1, \dots, v_n \in D_n$), $C$ is a set of constraints where 
$\bigcup_{c \in C} \mathit{supp}(c) = V$,\footnote{For instance, a binary constraint $c$ with $\mathit{supp}(c) = \{v_1, v_2\}$ is a function 
$c : (V \longrightarrow D) \longrightarrow A$ that depends only on the assignment of variables $\{v_1, v_2\} \subseteq V$.} and finally, 
the problem is given on a residuated SLM.

We define a \emph{projection} 
operator $\Downarrow$ for a constraint $c$ and variable $v$ as
$(c\Downarrow_v)  = \bigvee_{d \in D_v} c [v:=d].$
Projection decreases the support:
$supp(c\Downarrow_v) \subseteq supp({c}) \setminus \{v\}$. In Algorithm~\ref{alg:soft:bucket} we  use this operator  to eliminate variables from constraints.

At line $4$ Algorithm~\ref{alg:soft:bucket} finds bucket $\mathfrak{B}_i$, which contains all the constraints having $v_i$  in their support. Then at line $5$ we find a partition of $\mathfrak{B}_i$ into $p$ mini-buckets $\mathfrak{Q}$ limited by $z$. All the mini-buckets are projected over $v_i$, thus eliminating it from the support and obtaining a new constraint $g_{i,j}$ as result (line $7$).  Finally, the  bucket $\mathfrak{B}_i$ is discarded from the problem while adding $p$ new constraints $g$ (line $8$). The elimination of the last variable produces an empty-support constraints, whose composition provides the desired upper bound (that is, a solution of $P$ cannot have a better preference than this bound).

Bucket elimination is defined in Algorithm~\ref{alg:bucket}. The second part (from line $7$ to $14$) has been modified with respect to the one in e.g~\cite{bucket} in order to manage partially ordered preferences (as POMs can do).  Note that the $\cdot$ operator extends an assignment tuple $t$ with a new element.
The set $I$ stores all the domain values that produce new undominated tuples, which are saved in $T$. This is repeated for all the assignments in the set of partial solutions, which is finally updated in $\mathit{BSols}$ with not dominated solutions only (line $13$); $g_1$ is the empty-support constraint which represents the (best) preference of such solutions.

By having defined residuation on lexicographic orders, it is now possible to use it in order to have an estimation about how far a partitioning is from buckets: we can  use $\odiv$ to compute good bucket partitions, similarly to the method adopted in \cite{bucketsemiring}.
Let us  consider a partition $\mathfrak{Q} = \{\mathfrak{Q}_1, \mathfrak{Q}_2, \dots, \mathfrak{Q}_p \}$ of a bucket $\mathfrak{B}_i$, which contains all the constraints with variable $v_i$ in the support. We say that $ \mathfrak{Q}$ is a $z$ partition if the support size of its mini-buckets is smaller than $z$, i.e., if $\forall i. |\mathit{supp}(\mathfrak{Q}_i)| \leq z$. The approximation $\mu^\mathfrak{Q}$ 
of the bucket 
is  computed as

\footnotesize
$$\mu^\mathfrak{Q} = \bigotimes\limits^{p}_{j=1} \bigg(\left(\bigotimes\limits \mathfrak{Q}_j\right)\Downarrow_{v_i}\bigg)$$
\normalsize

\begin{algorithm}[t]
	\caption{Mini-Bucket for Residuated POMs.}
	\label{alg:soft:bucket}
	\scriptsize
\hspace*{\algorithmicindent} \textbf{Input:} $P = \langle V,D,C \rangle_{\mathit{POM}}$ and control parameter $z$ \\
\hspace*{\algorithmicindent} \textbf{Output:}  An upper bound of $(\bigotimes_{c \in C}) \Downarrow_V$
	\begin{algorithmic}[1]
		\Function{MBE}{}
		\State $\{v_1, v_2, \dots, v_n\} := \mathit{compute\_order}(P)$
		\For{$i = n \textrm{ to } 1$}
		\State $\mathfrak{B}_i := \{c \in C \mid v_i \in supp(c) \}$
		\State $\{\mathfrak{Q}_1, \mathfrak{Q}_2, \dots, \mathfrak{Q}_p\} := \mathit{partition(\mathfrak{B}_i, z)}$
		\For{$j = 1 \textrm{ to } p$}
			\State $g_{i,j} := (\bigotimes_{c\in \mathfrak{Q}_j} c) \Downarrow_{v_i}$ 
		\EndFor
		\State $C := (C \cup \{g_{i,1}, \dots, g_{i,j}\}) -  \mathfrak{B}_i$
		\EndFor
		\State \Return{$(\bigotimes_{c \in C} c)$}
		\EndFunction
	\end{algorithmic}
	\normalsize
\end{algorithm}

\begin{algorithm}[t]
	\caption{Bucket for Residuated POMs.}
	\label{alg:bucket}
	\scriptsize
	\hspace*{\algorithmicindent} \textbf{Input:} $P = \langle V,D,C \rangle_{\mathit{POM}}$ \\
	\hspace*{\algorithmicindent} \textbf{Output:}  The set of best solutions of $P$
	\begin{algorithmic}[1]
	\Function{BE}{}
	\State $\{v_1, v_2, \dots, v_n\} := \mathit{compute\_order}(P)$
	\For{$i = n \textrm{ to } 1$}
	\State $\mathfrak{B}_i := \{c \in C \mid v_i \in supp(c) \}$
	\State $g_{i} := (\bigotimes_{c\in \mathfrak{B}_j} c) \Downarrow_{v_i}$ 
	\State $C := (C \cup \{g_{i}\}) -  \mathfrak{B}_i$
	\EndFor
	    \State $\mathit{BSols} := \{\langle \rangle\}$ \Comment{The empty tuple}
	\For{$i = 1 \textrm{ to } |V|$}
	\State $T = \emptyset$
	\ForAll {$t \in \mathit{BSols}$}
	\State $I := \{ d \mid \nexists d'.
	(\bigotimes\limits \mathfrak{B}_i) (t \cdot (x_i = d)) < 
	(\bigotimes\limits \mathfrak{B}_i) (t \cdot (x_i = d'))\}$
	\State $T := T \cup \{t \cdot (x_i = d) \mid \exists d \in I\}$ 
	\EndFor
	\State $\mathit{BSols} := T \setminus \{ t \in T \mid \exists t' \in T. (\bigotimes\limits \mathfrak{B}_i) (t) < (\bigotimes\limits \mathfrak{B}_i) (t') \}$
	\EndFor
	\State \Return{$(g_1, \mathit{BSols})$}
	\EndFunction
\end{algorithmic}
	\normalsize
\end{algorithm}

It is noteworthy that residuation may help in quantifying the distance between a bucket 
and its partitioning 

\footnotesize
$$\bigg(\left(\bigotimes\limits \mathfrak{B} \right)\Downarrow_{v_i}\bigg) \smallodiv \left(\bigotimes\limits^{p}_{j=1} \bigg(\left(\bigotimes \mathfrak{Q}_j\right)\Downarrow_{v_i}\bigg)\right)$$
\normalsize

We can compute a refined approximation for a mini-bucket $\mathit{app}_{\mathfrak{Q}_j}$ with respect to the partitioned bucket as

\footnotesize
$$\Bigg(\bigg(\left(\bigotimes\limits \mathfrak{B} \right) \smallodiv \left( \bigotimes\limits (\mathfrak{B} \setminus \mathfrak{Q}_j) \right)\bigg)\Downarrow_{v_i}\Bigg) \smallodiv \bigg(\left(\bigotimes\limits \mathfrak{Q}_j\right)\Downarrow_{v_i}\bigg)$$
\normalsize

If we compose this approximation for each mini-bucket we get an approximation between a bucket and its partitioning
\footnotesize
$$\mathit{approx}_{\mu^\mathfrak{Q}} = \bigotimes\limits_{j} \mathit{approx}_{\mathfrak{Q}_j}$$
\normalsize


\subsection{Soft Branch-and-Bound}\label{sec:softbb}
Algorithms as MBE can be used to obtain a lower bound that underestimates the best solution of a given  problem $P = \langle V,D,C \rangle_{\mathit{POM}}$. 
This bound can be then passed as input  to a search algorithm in order to increase its pruning efficiency~\cite{jheuristics}. In the following of this section, we describe an 
example of search that can be used to find all the solutions of a (possibly lexicographic) soft CSP. Note that this algorithm is designed to deal with partially ordered solutions. 
On the contrary, in  \cite{schiex} the solution of  \emph{Lex-VCSP}  (and in general \emph{Valued CSP}s, i.e., \emph{VCSP}s) is associated with a set of totally ordered preferences.

The family of \emph{Soft Branch-and-Bound} algorithms explores the state space of a soft CSP as a tree. A \emph{Depth-First Branch-and-Bound} (\emph{DFBB})  (see Algorithm~\ref{alg:soft:bnb}) performs a depth-first traversal of the search tree. Given a partial assignment $t$ of $V$, an upper bound $\mathit{ub}(t)$ is an overestimation of the acceptance degree of any possible complete assignment involving $t$. A lower bound $\mathit{lb}(t)$ is instead a minimum acceptance degree that we are willing to accept during the search.

With each node in the search tree is associated a set of variables $X \subseteq V$ that have been already assigned (and the set of unassigned ones is 
given by $U = V \setminus X$), 
along with the associated (partial) assignment $t$ to those variables ($\mathit{supp}(t) = X$). A leaf node is associated with a complete assignment ($\mathit{supp}(t) = V$).
Each time a new internal node is created, a variable $v_i \in U$ to assign next is chosen, as well as an element $d \in D_{v_i}$ of its domain. Note that the procedure in 
 Algorithm~\ref{alg:soft:bnb} prunes the search space at line $8$, since it only explores those assignments $v_i = d$ such that there exists an upper bound $u \in \mathit{UB}$ 
that is better than a lower bound $l \in \mathit{LB}$.


The efficiency of Soft DFBB depends largely on its pruning capacity, which relies on the quality of its bounds: the higher $\mathit{lb}$ and the lower $\mathit{ub}$ (still ensuring they are actual bounds of optimal solutions), the better Soft DFBB performs. Note that, in order to deal with partial orderings of preferences, Algorithm~\ref{alg:soft:bnb} has to manage sets of undominated upper \emph{UB} and lower \emph{LB} bounds of (partial) solutions, differently from classical Branch-and-Bound. In Algorithm~\ref{alg:soft:bnb},  \emph{LB} returns a set of lower bounds for a given partial assignment. When all the variables are assigned (line $2$), the procedure stops with a solution.

\begin{Algorithm}[t]{11cm}
  \caption{Soft Depth-First Branch-and-Bound.}
    \label{alg:soft:bnb}
    \scriptsize
  \begin{algorithmic}[1]
    \Function{SoftDFBB}{$t, \mathit{LB}$}
   \If {$(\mathit{supp}(t) = V)$}
     \State \Return{$\bigotimes C(t)$}
   \Else  
   \State \textbf{let} $v_i \in U$ \Comment $U$ is the set of unassigned variables
    \ForAll{$d \in D_{v_i}$}
      \State $H:= \mathit{UB}(t \cdot (v_i = d))$
      \If {$(\exists u \in H, \exists l \in LB. \: l \leq_{\mathit{POM}} u)$}
        \State \ $\mathit{LB} :=  \mathit{LB} \cup \mathtt{SoftDFBB}(t \cdot (v_i = d), \mathit{LB})$
        \State $\mathit{LB} := \mathit{LB} \setminus \{e \in \mathit{LB} \mid \exists e' \in \mathit{LB}. e <_{\mathit{POM}}  e'\}$
      \EndIf
    \EndFor
    \State \Return{$LB$}
    \EndIf
    \EndFunction
  \end{algorithmic}
\end{Algorithm}

%
%

\section{Conclusions and Future Works}\label{sec:conclusion}
In this paper we considered a formal framework for soft CSP based on a residuated monoid of partially ordered preferences. 
This allows for using the classical solving algorithms that need some preference removal, as for instance arc consistency where values need to be moved 
from binary to unary constraints, or for proving a cost estimation to be used during the search for solutions, as for instance branch-and-bound algorithms.
%
The contribution of this paper is twofold. On the one side, we proved the adequacy of the formalism for modelling lexicographic orders. 
On the other side, we showed how it can enable heuristics for efficiently solving soft CSPs, such as the Nucket and Mini-bucket elimination.
%

Our focus on soft CSP includes its computational counterparts based on constraints, such as soft CCP~\cite{labelled}, and in fact, considering infinite tuples 
enables to model temporal reasoning as shown for soft constraint automata in \cite{sca}. However, the framework is reminiscent, and is in fact an extension, 
of previous formalisms such as monotonic logic programming~\cite{residuatedlogic}, whose semantics is given in terms of residuated 
lattices of complete lattices of truth-values. And it fits in the current interests on the development of sequent systems for substructural logics, 
as witnessed by current research projects~\cite{ciabattoni}:
%
well-known examples are Lukasiewicz's many-valued logics, relevance logics and linear logics.

All the connections sketched above deserve further investigations. For the time being, we leave to future work some related extensions.
Mini-bucket is often used for providing an upper bound in branch-and-bound algorithms: for this reason we will investigate this technique, 
as well as other solving methods used in the solution of lexicographic problems~\cite{freuderlexi}. 
We will also study ad-hoc heuristics for selecting the order in Algorithm~\ref{alg:soft:bucket}, directly depending on lexicographic orders.

\bibliographystyle{splncs03}
\bibliography{main}
\end{document}